\documentclass[12pt,a4paper,american]{article}
\usepackage{amsthm}
\usepackage{amsmath}
\usepackage{amssymb}

\makeatletter
\newcommand{\lyxaddress}[1]{
\par {\raggedright #1
\vspace{1.4em}
\noindent\par}
}
\theoremstyle{plain}
\newtheorem{thm}{Theorem}
  \theoremstyle{definition}
  \newtheorem{defn}[thm]{Definition}
  \theoremstyle{remark}
  \newtheorem{rem}[thm]{Remark}
  \theoremstyle{plain}
  \newtheorem{prop}[thm]{Proposition}
  \theoremstyle{plain}
  \newtheorem{lem}[thm]{Lemma}
 \theoremstyle{definition}
  \newtheorem{example}[thm]{Example}
  \theoremstyle{remark}
  \newtheorem*{rem*}{Remark}

\usepackage{mathrsfs}

\addtolength{\textwidth}{4em}
\addtolength{\hoffset}{-2em}
\addtolength{\textheight}{10ex}
\addtolength{\voffset}{-6ex}






\newcommand{\diag}{\mathop\mathrm{diag}\nolimits}
\renewcommand{\span}{\mathop\mathrm{span}\nolimits}

\newcommand{\spec}{\mathop\mathrm{spec}\nolimits}



\newcommand{\hs}[1]{\hspace{#1}}

\makeatother

\usepackage{babel}

\begin{document}

\title{On the eigenvalue problem for a particular class of finite Jacobi
matrices}

\author{F.~\v{S}tampach$^{1}$, P.~\v{S}\v{t}ov\'\i\v{c}ek$^{2}$}

\date{{}}

\maketitle

\lyxaddress{Department of Mathematics, Faculty of Nuclear Science, Czech Technical
University in Prague, Trojanova13, 12000 Praha, Czech Republic}

\lyxaddress{$^{1}$stampfra@fjfi.cvut.cz}

\lyxaddress{$^{2}$stovicek@kmlinux.fjfi.cvut.cz}
\begin{abstract}
\noindent A function $\mathfrak{F}$ with simple and nice algebraic
properties is defined on a subset of the space of complex sequences.
Some special functions are expressible in terms of $\mathfrak{F}$,
first of all the Bessel functions of first kind. A compact formula
in terms of the function $\mathfrak{F}$ is given for the determinant
of a Jacobi matrix. Further we focus on the particular class of Jacobi
matrices of odd dimension whose parallels to the diagonal are constant
and whose diagonal depends linearly on the index. A formula is derived
for the characteristic function. Yet another formula is presented
in which the characteristic function is expressed in terms of the
function $\mathfrak{F}$ in a simple and compact manner. A special
basis is constructed in which the Jacobi matrix becomes a sum of a
diagonal matrix and a rank-one matrix operator. A vector-valued function
on the complex plain is constructed having the property that its values
on spectral points of the Jacobi matrix are equal to corresponding
eigenvectors.
\end{abstract}
\vskip\baselineskip\noindent \emph{Keywords}: tridiagonal matrix,
finite Jacobi matrix, eigenvalue problem, characteristic function\\
\noindent \emph{2000 Mathematical Subject Classification}: 47B36,
15A18, 33C10

\section{Introduction}

The results of the current paper are related to the eigenvalue problem
for finite-dimensional symmetric tridiagonal (Jacobi) matrices. Notably,
the eigenvalue problem for finite Jacobi matrices is solvable explicitly
in terms of generalized hypergeometric series \cite{KuznetsovSklyanin}.
Here we focus on a very particular class of Jacobi matrices which
makes it possible to derive some expressions in a comparatively simple
and compact form. We do not aim at all, however, at a complete solution
of the eigenvalue problem. We restrict ourselves to derivation of
several explicit formulas, first of all that for the characteristic
function, as explained in more detail below. We also develop some
auxiliary notions which may be, to our opinion, of independent interest.

First, we introduce a function, called $\mathfrak{F}$, defined on
a subset of the space of complex sequences. In the remainder of the
paper it is intensively used in various formulas. The function $\mathfrak{F}$
has remarkably simple and nice algebraic properties. Among others,
with the aid of $\mathfrak{F}$ one can relate an infinite continued
fraction to any sequence from the definition domain on which $\mathfrak{F}$
takes a nonzero value. This may be compared to the fact that there
exists a correspondence between infinite Jacobi matrices and infinite
continued fractions, as explained in \cite[Chp.~1]{Akhiezer}. Let
us also note that some special functions are expressible in terms
of $\mathfrak{F}$. First of all this concerns the Bessel functions
of first kind. We examine the relationship between $\mathfrak{F}$
and the Bessel functions and provide some supplementary details on
it.

Further we introduce an infinite antisymmetric matrix, with entries
indexed by integers, such that its every row or column obeys a second-order
difference equation which is very well known from the theory of Bessel
functions. With the aid of function $\mathfrak{F}$ one derives a
general formula for entries of this matrix. The matrix also plays
an essential role in the remainder of the paper.

As an application we present a comparatively simple formula for the
determinant of a Jacobi matrix of odd dimension under the assumption
that the neighboring parallels to the diagonal are constant. As far
as the determinant is concerned this condition is not very restrictive
since a Jacobi matrix can be written as a product of another Jacobi
matrix with all units on the neighboring parallels which is sandwiched
with two diagonal matrices. The formula further simplifies in the
particular case when the diagonal is antisymmetric (with respect to
its center). In that case zero is always an eigenvalue and we give
an explicit formula for the corresponding eigenvector.

Finally we focus on the rather particular class of Jacobi matrices
of odd dimension whose parallels to the diagonal are constant and
whose diagonal depends linearly on the index. Within this class it
suffices to consider matrices whose diagonal is, in addition, antisymmetric.
In this case we derive a formula for the characteristic function.
Yet another formula is presented in which the characteristic function
is expressed in terms of the function $\mathfrak{F}$ in a very simple
and compact manner. Moreover, we construct a basis in which the Jacobi
matrix becomes a sum of a diagonal matrix and a rank-one matrix operator.
This form is rather suitable for various computations. Particularly,
one can readily derive a formula for the resolvent. In addition, a
vector-valued function on the complex plain is constructed having
the property that its values on spectral points of the Jacobi matrix
are equal to corresponding eigenvectors.

\section{The function $\mathfrak{F}$}

We introduce a function $\mathfrak{F}$ defined on a subset of the
linear space formed by all complex sequences $x=\{x_{k}\}_{k=1}^{\infty}$.
\begin{defn}
Define $\mathfrak{F}:D\rightarrow\mathbb{C}$, \begin{equation}
\mathfrak{F}(x)=1+\sum_{m=1}^{\infty}(-1)^{m}\sum_{k_{1}=1}^{\infty}\,\sum_{k_{2}=k_{1}+2}^{\infty}\dots\,\sum_{k_{m}=k_{m-1}+2}^{\infty}\, x_{k_{1}}x_{k_{1}+1}x_{k_{2}}x_{k_{2}+1}\dots x_{k_{m}}x_{k_{m}+1}\label{eq:defF}\end{equation}
where \[
D=\left\{ \{x_{k}\}_{k=1}^{\infty};\,\sum_{k=1}^{\infty}|x_{k}x_{k+1}|<\infty\right\} .\]
For a finite number of complex variables we identify $\mathfrak{F}(x_{1},x_{2},\dots,x_{n})$
with $\mathfrak{F}(x)$ where $x=(x_{1},x_{2},\dots,x_{n},0,0,0,\dots)$.
By convention, we also put $\mathfrak{F}(\emptyset)=1$ where $\emptyset$
is the empty sequence.\end{defn}
\begin{rem}
Note that the domain $D$ is not a linear space. One has, however,
$\ell^{2}(\mathbb{N})\subset D$. To see that the series on the RHS
of (\ref{eq:defF}) converges absolutely whenever $x\in D$ observe
that the absolute value of the $m$th summand is majorized by the
expression\[
\sum_{\substack{\noalign{\smallskip}k\in\mathbb{N}^{m}\\
\noalign{\smallskip}k_{1}<k_{2}<\ldots<k_{m}}
}\,|x_{k_{1}}x_{k_{1}+1}x_{k_{2}}x_{k_{2}+1}\dots x_{k_{m}}x_{k_{m}+1}|\leq\frac{1}{m!}\!\left(\sum_{j=1}^{\infty}|x_{j}x_{j+1}|\right)^{\! m}.\]

\end{rem}
Obviously, if all but finitely many elements of a sequence $x$ are
zeroes then $\mathfrak{F}(x)$ reduces to a finite sum. Thus\begin{eqnarray*}
 &  & \mathfrak{F}(x_{1})=1,\mbox{ }\mathfrak{F}(x_{1},x_{2})=1-x_{1}x_{2},\mbox{ }\mathfrak{F}(x_{1},x_{2},x_{3})=1-x_{1}x_{2}-x_{2}x_{3},\\
 &  & \mathfrak{F}(x_{1},x_{2},x_{3},x_{4})=1-x_{1}x_{2}-x_{2}x_{3}-x_{3}x_{4}+x_{1}x_{2}x_{3}x_{4},\mbox{ etc.}\end{eqnarray*}

Let $T$ denote the truncation operator from the left defined on the
space of all sequences: \[
T(\{x_{k}\}_{k=1}^{\infty})=\{x_{k+1}\}_{k=1}^{\infty}.\]
$T^{n}$, $n=0,1,2,\dots$, stands for a power of $T$. Hence $T^{n}(\{x_{k}\}_{k=1}^{\infty})=\{x_{k+n}\}_{k=1}^{\infty}$.

The proof of the following proposition is immediate.
\begin{prop}
\label{thm:F_T_recur} For all $x\in D$ one has\begin{equation}
\mathfrak{F}(x)=\mathfrak{F}(Tx)-x_{1}x_{2}\,\mathfrak{F}(T^{2}x).\label{eq:F_T_recur}\end{equation}
Particularly, if $n\geq2$ then\begin{equation}
\mathfrak{F}(x_{1},x_{2},x_{3},\dots,x_{n})=\mathfrak{F}(x_{2},x_{3},\dots,x_{n})-x_{1}x_{2}\,\mathfrak{F}(x_{3},\dots,x_{n}).\label{eq:F_finite_recur}\end{equation}
\end{prop}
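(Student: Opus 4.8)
The plan is to read off the recurrence directly from the series definition (\ref{eq:defF}) by sorting the multi-indices according to the value of their first entry $k_{1}$. Write $\mathfrak{F}(x)=\sum_{m=0}^{\infty}(-1)^{m}S_{m}(x)$, where $S_{0}(x)=1$ and, for $m\geq1$, $S_{m}(x)$ is the inner $m$-fold sum appearing in (\ref{eq:defF}), i.e.\ the sum of $x_{k_{1}}x_{k_{1}+1}\cdots x_{k_{m}}x_{k_{m}+1}$ over all $m$-tuples with $1\leq k_{1}$ and $k_{j+1}\geq k_{j}+2$. By the majorization given in the Remark, each $S_{m}(x)$ converges absolutely for $x\in D$, so every rearrangement and relabelling below is legitimate.

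The key step is to establish, for every $m\geq1$, the identity $S_{m}(x)=S_{m}(Tx)+x_{1}x_{2}\,S_{m-1}(T^{2}x)$. To see this I would split the index set defining $S_{m}(x)$ into the $m$-tuples with $k_{1}\geq2$ and those with $k_{1}=1$. In the first group the shift $k_{j}\mapsto k_{j}-1$ is a bijection onto the full index set of $S_{m}$, and since $x_{k_{j}}x_{k_{j}+1}=(Tx)_{k_{j}-1}(Tx)_{k_{j}}$ this group sums precisely to $S_{m}(Tx)$. In the second group the common factor $x_{1}x_{2}$ comes out; the remaining indices satisfy $k_{2}\geq3$ and $k_{j+1}\geq k_{j}+2$, so the shift $k_{j}\mapsto k_{j}-2$ for $j\geq2$, together with $x_{k_{j}}x_{k_{j}+1}=(T^{2}x)_{k_{j}-2}(T^{2}x)_{k_{j}-1}$, identifies this group with $x_{1}x_{2}\,S_{m-1}(T^{2}x)$ (the case $m=1$ contributing $x_{1}x_{2}\,S_{0}(T^{2}x)=x_{1}x_{2}$).

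Summing over $m$ then finishes the argument: $\mathfrak{F}(x)=1+\sum_{m\geq1}(-1)^{m}S_{m}(Tx)+x_{1}x_{2}\sum_{m\geq1}(-1)^{m}S_{m-1}(T^{2}x)$, and reindexing $m\mapsto m+1$ in the last sum and extracting the sign turns it into $-x_{1}x_{2}\sum_{m\geq0}(-1)^{m}S_{m}(T^{2}x)=-x_{1}x_{2}\,\mathfrak{F}(T^{2}x)$, while the first two terms assemble to $\mathfrak{F}(Tx)$. The finite-variable formula (\ref{eq:F_finite_recur}) is just the specialization to $x=(x_{1},\dots,x_{n},0,0,\dots)$, for which $Tx=(x_{2},\dots,x_{n},0,\dots)$ and $T^{2}x=(x_{3},\dots,x_{n},0,\dots)$. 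There is essentially no obstacle here; the only thing needing a little care is keeping the two index shifts straight and citing the absolute convergence from the Remark to justify the rearrangement, which is why the statement can fairly be called immediate.
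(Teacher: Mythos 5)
Your proof is correct and is precisely the argument the authors have in mind when they declare the proposition ``immediate'': splitting the multi-index sum according to whether $k_{1}=1$ or $k_{1}\geq 2$ and reindexing, with absolute convergence justifying the rearrangement. Nothing further is needed.
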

\begin{rem}
Clearly, given that $\mathfrak{F}(\emptyset)=\mathfrak{F}(x_{1})=1$,
relation (\ref{eq:F_finite_recur}) determines recursively and unambiguously
$\mathfrak{F}(x_{1},\ldots,x_{n})$ for any finite number of variables
$n\in\mathbb{Z}_{+}$ (including $n=0$).
\end{rem}
\begin{rem}
One readily verifies that\begin{equation}
\mathfrak{F}(x_{1},x_{2},\dots,x_{n})=\mathfrak{F}(x_{n},\dots,x_{2},x_{1}).\label{eq:Fx_reversed}\end{equation}
Hence equality (\ref{eq:F_finite_recur}) implies, again for $n\geq2$,\begin{equation}
\mathfrak{F}(x_{1},\dots,x_{n-2},x_{n-1},x_{n})=\mathfrak{F}(x_{1},\dots,x_{n-2},x_{n-1})-x_{n-1}x_{n}\,\mathfrak{F}(x_{1},\dots,x_{n-2}).\label{eq:F_inverse_recur}\end{equation}

\end{rem}
\begin{rem}
\label{rem:F_FT_con_frac} For a given $x\in D$ such that $\mathfrak{F}(x)\neq0$
let us introduce sequences $\{P_{k}\}_{k=0}^{\infty}$ and $\{Q_{k}\}_{k=0}^{\infty}$
by $P_{0}=0$ and $P_{k}=\mathfrak{F}(x_{2},\ldots,x_{k})$ for $k\geq1$,
$Q_{k}=\mathfrak{F}(x_{1},\ldots,x_{k})$ for $k\geq0$. According
to (\ref{eq:F_inverse_recur}), the both sequences obey the difference
equation\[
Y_{k+1}=Y_{k}-x_{k}x_{k+1}Y_{k-1},\mbox{ }k=1,2,3,\ldots,\]
with the initial conditions $P_{0}=0$, $P_{1}=1$, $Q_{0}=Q_{1}=1$,
and define the infinite continued fraction\[
\frac{\mathfrak{F}(Tx)}{\mathfrak{F}(x)}=\lim_{k\to\infty}\,\frac{P_{k}}{Q_{k}}=\dfrac{1}{1-\dfrac{x_{1}x_{2}\phantom{I}}{1-\dfrac{x_{2}x_{3}\phantom{I}}{1-\dfrac{x_{3}x_{4}\phantom{I}}{1-\ldots}}}}\,.\]

\end{rem}
Proposition~\ref{thm:F_T_recur} admits a generalization.
\begin{prop}
For every $x\in D$ and $k\in\mathbb{N}$ one has\begin{equation}
\mathfrak{F}(x)=\mathfrak{F}(x_{1},\ldots,x_{k})\,\mathfrak{F}(T^{k}x)-\mathfrak{F}(x_{1},\ldots,x_{k-1})x_{k}x_{k+1}\,\mathfrak{F}(T^{k+1}x).\label{eq:F_T_recur_k}\end{equation}
\end{prop}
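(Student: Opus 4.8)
The plan is to prove the identity by induction on $k$, using only the two recurrences already at our disposal, namely \eqref{eq:F_T_recur} and its reversed counterpart \eqref{eq:F_inverse_recur}. For $k=1$ the claimed formula reads
\[
\mathfrak{F}(x)=\mathfrak{F}(x_{1})\,\mathfrak{F}(Tx)-\mathfrak{F}(\emptyset)\,x_{1}x_{2}\,\mathfrak{F}(T^{2}x),
\]
and since $\mathfrak{F}(x_{1})=\mathfrak{F}(\emptyset)=1$ by the conventions in the Definition, this is precisely Proposition~\ref{thm:F_T_recur}. Hence the base case is already established.

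For the inductive step, assume \eqref{eq:F_T_recur_k} holds for some $k\in\mathbb{N}$. The key observation is that $T^{k}x$ again belongs to $D$ (its tail sums $\sum_{j}|x_{j+k}x_{j+k+1}|$ are dominated by those of $x$) and has first two entries $x_{k+1}$, $x_{k+2}$, so Proposition~\ref{thm:F_T_recur} applied to $T^{k}x$ gives
\[
\mathfrak{F}(T^{k}x)=\mathfrak{F}(T^{k+1}x)-x_{k+1}x_{k+2}\,\mathfrak{F}(T^{k+2}x).
\]
Substituting this into the inductive hypothesis and collecting the coefficient of $\mathfrak{F}(T^{k+1}x)$ yields
\[
\mathfrak{F}(x)=\bigl(\mathfrak{F}(x_{1},\ldots,x_{k})-x_{k}x_{k+1}\,\mathfrak{F}(x_{1},\ldots,x_{k-1})\bigr)\mathfrak{F}(T^{k+1}x)-\mathfrak{F}(x_{1},\ldots,x_{k})\,x_{k+1}x_{k+2}\,\mathfrak{F}(T^{k+2}x).
\]
It then remains to recognize the bracketed factor: by \eqref{eq:F_inverse_recur} with $n=k+1$ one has $\mathfrak{F}(x_{1},\ldots,x_{k})-x_{k}x_{k+1}\,\mathfrak{F}(x_{1},\ldots,x_{k-1})=\mathfrak{F}(x_{1},\ldots,x_{k+1})$, which turns the last display into exactly \eqref{eq:F_T_recur_k} with $k$ replaced by $k+1$, closing the induction.

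I do not expect a genuine obstacle here; the argument is a routine two-step induction. The only points that need a little care are the bookkeeping of the shift operator (so that the product of adjacent entries is indeed $x_{k}x_{k+1}$ on the ``left'' block and $x_{k+1}x_{k+2}$ after one further shift) and the degenerate value $k=1$, where $\mathfrak{F}(x_{1},\ldots,x_{k-1})$ must be read as $\mathfrak{F}(\emptyset)=1$; both are covered by the conventions adopted in the Definition. (One could instead give a direct combinatorial proof by splitting the multi-index sum defining $\mathfrak{F}(x)$ in \eqref{eq:defF} according to the position of the indices relative to $k$ and $k+1$, but the inductive proof above is shorter and self-contained.)
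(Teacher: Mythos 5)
Your proof is correct and follows essentially the same route as the paper's: induction on $k$, applying Proposition~\ref{thm:F_T_recur} to the shifted sequence $T^{k}x$ and then using \eqref{eq:F_inverse_recur} to combine the coefficient of $\mathfrak{F}(T^{k+1}x)$ into $\mathfrak{F}(x_{1},\ldots,x_{k+1})$. No issues.
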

\begin{proof}
Let us proceed by induction in $k$. For $k=1$, equality (\ref{eq:F_T_recur_k})
coincides with (\ref{eq:F_T_recur}). Suppose (\ref{eq:F_T_recur_k})
is true for $k\in\mathbb{N}$. Applying Proposition~\ref{thm:F_T_recur}
to the sequence $T^{k}x$ and using (\ref{eq:F_inverse_recur}) one
finds that the RHS of (\ref{eq:F_T_recur_k}) equals\begin{eqnarray*}
 &  & \mathfrak{F}(x_{1},\ldots,x_{k})\,\mathfrak{F}(T^{k+1}x)-\mathfrak{F}(x_{1},\ldots,x_{k})x_{k+1}x_{k+2}\,\mathfrak{F}(T^{k+2}x)\\
 &  & \phantom{=\,}-\,\mathfrak{F}(x_{1},\ldots,x_{k-1})x_{k}x_{k+1}\,\mathfrak{F}(T^{k+1}x)\\
 &  & =\,\mathfrak{F}(x_{1},\ldots,x_{k},x_{k+1})\,\mathfrak{F}(T^{k+1}x)-\mathfrak{F}(x_{1},\ldots,x_{k})x_{k+1}x_{k+2}\,\mathfrak{F}(T^{k+2}x).\end{eqnarray*}
This concludes the verification.\end{proof}
\begin{rem}
With the aid of Proposition~\ref{thm:F_T_recur} one can rewrite
equality (\ref{eq:F_T_recur_k}) as follows\begin{equation}
\mathfrak{F}(x)=\mathfrak{F}(x_{1},\ldots,x_{k})\,\mathfrak{F}\!\left(\frac{\mathfrak{F}(x_{1},\ldots,x_{k-1})}{\mathfrak{F}(x_{1},\ldots,x_{k})}\, x_{k},x_{k+1},x_{k+2},x_{k+3},\ldots\right).\label{eq:Fx_eq_Fxk_Fxrem}\end{equation}

\end{rem}
Later on, we shall also need the following identity.
\begin{lem}
For any $n\in\mathbb{N}$ one has\begin{eqnarray}
 &  & u_{1}\mathfrak{F}(u_{2},u_{3},\ldots,u_{n})\mathfrak{F}(v_{1},v_{2},v_{3},\ldots,v_{n})-v_{1}\mathfrak{F}(u_{1},u_{2},u_{3},\ldots,u_{n})\mathfrak{F}(v_{2},v_{3},\ldots,v_{n})\nonumber \\
 &  & =\,\sum_{j=1}^{n}\left(\prod_{k=1}^{j-1}u_{k}v_{k}\right)\!(u_{j}-v_{j})\,\mathfrak{F}(u_{j+1},u_{j+2},\ldots,u_{n})\mathfrak{F}(v_{j+1},v_{j+2},\ldots,v_{n}).\label{eq:uFuFv_antisym}\end{eqnarray}
\end{lem}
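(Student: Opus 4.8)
The plan is to prove (\ref{eq:uFuFv_antisym}) by induction on $n$, with the three-term recurrence (\ref{eq:F_finite_recur}) as essentially the only tool. Denote by $L_{n}$ the left-hand side of (\ref{eq:uFuFv_antisym}). The base case $n=1$ is immediate: since $\mathfrak{F}(\emptyset)=\mathfrak{F}(u_{1})=\mathfrak{F}(v_{1})=1$, one gets $L_{1}=u_{1}-v_{1}$, which is exactly the single $j=1$ term on the right-hand side (the product $\prod_{k=1}^{0}$ being empty, hence $1$).

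For the inductive step take $n\ge2$ and expand both $\mathfrak{F}(u_{1},u_{2},\ldots,u_{n})$ and $\mathfrak{F}(v_{1},v_{2},\ldots,v_{n})$ by one step of (\ref{eq:F_finite_recur}) inside $L_{n}$. After cancelling/collecting the terms proportional to $\mathfrak{F}(u_{2},\ldots,u_{n})\mathfrak{F}(v_{2},\ldots,v_{n})$ one arrives at
\[
L_{n}=(u_{1}-v_{1})\,\mathfrak{F}(u_{2},\ldots,u_{n})\,\mathfrak{F}(v_{2},\ldots,v_{n})+u_{1}v_{1}\Bigl(u_{2}\,\mathfrak{F}(u_{3},\ldots,u_{n})\,\mathfrak{F}(v_{2},\ldots,v_{n})-v_{2}\,\mathfrak{F}(u_{2},\ldots,u_{n})\,\mathfrak{F}(v_{3},\ldots,v_{n})\Bigr).
\]
The key point — and the only step that needs a moment's thought — is the observation that the bracketed quantity is nothing but $L_{n-1}$ evaluated on the truncated sequences $(u_{2},\ldots,u_{n})$ and $(v_{2},\ldots,v_{n})$, so the induction hypothesis applies to it verbatim.

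It then remains to substitute the induction hypothesis for that bracket, distribute the factor $u_{1}v_{1}$, and shift the summation index: the $i$-th summand of the identity for the truncated sequences becomes the $(i+1)$-st summand of the claimed identity, with $u_{1}v_{1}$ absorbed into the product $\prod_{k=1}^{j-1}u_{k}v_{k}$, while the leading term $(u_{1}-v_{1})\mathfrak{F}(u_{2},\ldots,u_{n})\mathfrak{F}(v_{2},\ldots,v_{n})$ supplies the missing $j=1$ summand; assembling the two contributions yields precisely the right-hand side of (\ref{eq:uFuFv_antisym}). I do not expect a genuine obstacle here — once (\ref{eq:F_finite_recur}) is inserted the argument is pure bookkeeping. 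The only points to watch are that (\ref{eq:F_finite_recur}) requires at least two arguments, which is why $n=1$ is checked separately as the base of the induction, and that the degenerate instances (e.g.\ $\mathfrak{F}(u_{3},\ldots,u_{n})=\mathfrak{F}(\emptyset)=1$ when $n=2$) are consistent with the convention $\mathfrak{F}(\emptyset)=1$.
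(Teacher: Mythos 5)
Your proof is correct, and it is exactly the argument the paper has in mind: the paper's own proof consists of the single sentence that the identity follows by induction on $n$ using (\ref{eq:F_finite_recur}), which is precisely what you carry out. The algebra in your inductive step checks out, and the recognition of the bracketed quantity as $L_{n-1}$ for the truncated sequences is the right key observation.
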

\begin{proof}
The equality can be readily proved by induction in $n$ with the aid
of (\ref{eq:F_finite_recur}).\end{proof}
\begin{example}
For $t,w\in\mathbb{C}$, $|t|<1$, a simple computation leads to the
equality\begin{equation}
\mathfrak{F}\!\left(\left\{ t^{k-1}w\right\} _{k=1}^{\infty}\right)=1+\sum_{m=1}^{\infty}(-1)^{m}\,\frac{t^{m(2m-1)}w^{2m}}{(1-t^{2})(1-t^{4})\cdots(1-t^{2m})}\,.\label{eq:F_geom_tw}\end{equation}
This function can be identified with a basic hypergeometric series
(also called q-hypergeometric series) defined by\[
_{r}\phi_{s}(a;b;q,z)=\sum_{k=0}^{\infty}\frac{\left(a_{1};q\right)_{k}\ldots\left(a_{r};q\right)_{k}}{\left(b_{1};q\right)_{k}\ldots\left(b_{s};q\right)_{k}}\left((-1)^{k}q^{\frac{1}{2}k(k-1)}\right)^{\!1+s-r}\frac{z^{k}}{(q;q)_{k}}\]
where $r,s\in\mathbb{Z}_{+}$ (nonnegative integers) and\[
(\alpha;q)_{k}=\prod_{j=0}^{k-1}\left(1-\alpha q^{j}\right),\mbox{ }k=0,1,2,\ldots,\]
see \cite{GasperRahman}. In fact, the RHS in (\ref{eq:F_geom_tw})
equals $_{0}\phi_{1}(;0;t^{2},-tw^{2})$ where\[
_{0}\phi_{1}(;0;q,z)=\sum_{k=0}^{\infty}\frac{q^{k(k-1)}}{(q;q)_{k}}\, z^{k}=\sum_{k=0}^{\infty}\frac{q^{k(k-1)}}{(1-q)(1-q^{2})\ldots(1-q^{k})}\, z^{k},\]
with $q,z\in\mathbb{C}$, $|q|<1$, and the recursive rule (\ref{eq:F_T_recur})
takes the form\begin{equation}
_{0}\phi_{1}(;0;q,z)={}_{0}\phi_{1}(;0;q,qz)+z\,{}_{0}\phi_{1}(;0;q,q^{2}z).\label{eq:0phi1_recur}\end{equation}
Put $e(q;z)={}_{0}\phi_{1}(;0;q,(1-q)z)$. Then $\lim_{q\uparrow1}e(q;z)=\exp(z)$.
Hence $e(q;z)$ can be regarded as a q-deformed exponential function
though this is not the standard choice (compare with \cite{GasperRahman}
or \cite{KlimykSchmudgen} and references therein). Equality (\ref{eq:0phi1_recur})
can be interpreted as the discrete derivative\[
\frac{e(q;z)-e(q;qz)}{(1-q)z}=e(q;q^{2}z).\]
Moreover, in view of Remark~\ref{rem:F_FT_con_frac}, one has\[
\dfrac{1}{1+\dfrac{z}{1+\dfrac{qz}{1+\dfrac{q^{2}z}{1+\ldots}}}}=\frac{{}_{0}\phi_{1}(;0;q,qz)}{{}_{0}\phi_{1}(;0;q,z)}\,.\]
This equality is related to the Rogers-Ramanujan identities, see the
discussion in \cite[Chp.~7]{Andrews}.
\end{example}
\begin{example}
\label{rem:BesselJ_rel_F} The Bessel functions of the first kind
can be expressed in terms of function $\mathfrak{F}$. More precisely,
for $\nu\notin-\mathbb{N}$, one has\begin{equation}
J_{\nu}(2w)=\frac{w^{\nu}}{\Gamma(\nu+1)}\,\mathfrak{F}\!\left(\left\{ \frac{w}{\nu+k}\right\} _{k=1}^{\infty}\right).\label{eq:BesselJ_rel_F}\end{equation}
The recurrence relation (\ref{eq:F_T_recur}) transforms to the well
known identity\[
zJ_{\nu}(z)-2(\nu+1)J_{\nu+1}(z)+zJ_{\nu+2}(z)=0.\]
To prove (\ref{eq:BesselJ_rel_F}) one can proceed by induction in
$j=0,1,\ldots,m-1$, to show that\begin{eqnarray*}
 &  & \sum_{k_{1}=1}^{\infty}\,\sum_{k_{2}=k_{1}+2}^{\infty}\dots\,\sum_{k_{m}=k_{m-1}+2}^{\infty}\\
 &  & \quad\times\,\frac{1}{(\nu+k_{1})(\nu+k_{1}+1)(\nu+k_{2})(\nu+k_{2}+1)\ldots(\nu+k_{m})(\nu+k_{m}+1)}\\
 &  & =\,\frac{1}{j!}\,\sum_{k_{1}=1}^{\infty}\,\sum_{k_{2}=k_{1}+2}^{\infty}\dots\,\sum_{k_{m-j}=k_{m-j-1}+2}^{\infty}\\
 &  & \quad\times\,\frac{1}{(\nu+k_{1})(\nu+k_{1}+1)(\nu+k_{2})(\nu+k_{2}+1)\ldots(\nu+k_{m-j})(\nu+k_{m-j}+1)}\\
 &  & \quad\times\,\frac{1}{(\nu+k_{m-j}+2)(\nu+k_{m-j}+3)\ldots(\nu+k_{m-j}+j+1)}\,.\end{eqnarray*}
In particular, for $j=m-1$, the RHS equals\begin{eqnarray*}
 &  & \frac{1}{(m-1)!}\,\sum_{k_{1}=1}^{\infty}\,\frac{1}{(\nu+k_{1})(\nu+k_{1}+1)(\nu+k_{1}+2)\ldots(\nu+k_{1}+m)}\\
 &  & =\,\frac{1}{m!\,(\nu+1)(\nu+2)\ldots(\nu+m)}=\frac{\Gamma(\nu+1)}{m!\,\Gamma(\nu+m+1)}\end{eqnarray*}
and so\[
\frac{w^{\nu}}{\Gamma(\nu+1)}\,\mathfrak{F}\!\left(\left\{ \frac{w}{\nu+k}\right\} _{k=1}^{\infty}\right)=\sum_{m=0}^{\infty}(-1)^{m}\,\frac{w^{2m+\nu}}{m!\,\Gamma(\nu+m+1)}\,,\]
as claimed. Furthermore, Remark~\ref{rem:F_FT_con_frac} provides
us with the infinite fraction\[
\frac{\nu+1}{w}\,\frac{J_{\nu+1}(2w)}{J_{\nu}(2w)}=\dfrac{1}{1-\dfrac{\dfrac{w^{2}}{(\nu+1)(\nu+2)}}{1-\dfrac{\dfrac{w^{2}}{(\nu+2)(\nu+3)}}{1-\dfrac{\dfrac{w^{2}}{(\nu+3)(\nu+4)}}{1-\dots}}}}\,.\]
This can be rewritten as\[
\frac{J_{\nu+1}(z)}{J_{\nu}(z)}=\dfrac{z}{2(\nu+1)-\dfrac{z^{2}}{2(\nu+2)-\dfrac{z^{2}}{2(\nu+3)-\dfrac{z^{2}}{2(\nu+4)-\ldots}}}}\,.\]

\end{example}
Comparing to Example~\ref{rem:BesselJ_rel_F}, one can also find
the value of $\mathbb{\mathfrak{F}}$ on the truncated sequence $\{w/(\nu+k)\}_{k=1}^{n}$.
\begin{prop}
\label{thm:Fw_over_n_trunc} For $n\in\mathbb{Z}_{+}$ and $\nu\in\mathbb{C}\setminus\{-n,-n+1,\ldots,-1\}$
one has\begin{equation}
\mathfrak{F}\!\left(\frac{w}{\nu+1},\frac{w}{\nu+2},\dots,\frac{w}{\nu+n}\right)\!=\frac{\Gamma(\nu+1)}{\Gamma(\nu+n+1)}\sum_{s=0}^{[n/2]}\,(-1)^{s}\,\frac{(n-s)!}{s!\,(n-2s)!}\, w^{2s}\,\prod_{j=s}^{n-1-s}(\nu+n-j).\label{eq:Fw_over_nu_trunc}\end{equation}
In particular, for $m,n\in\mathbb{Z}_{+}$, $m\leq n$, one has \begin{equation}
\mathfrak{F}\!\left(\frac{w}{m+1},\frac{w}{m+2},\dots,\frac{w}{n}\right)\!=\frac{m!}{n!}\,\sum_{s=0}^{[(n-m)/2]}\,(-1)^{s}\,\frac{(n-s)!\,(n-m-s)!}{s!\,(m+s)!\,(n-m-2s)!}\, w^{2s}.\label{eq:Fw_over_n_trunc}\end{equation}
\end{prop}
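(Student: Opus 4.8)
The plan is to prove (\ref{eq:Fw_over_nu_trunc}) by induction on $n$, using the backward recurrence (\ref{eq:F_inverse_recur}), and then to obtain (\ref{eq:Fw_over_n_trunc}) by specialization. Write $\mathfrak{F}_n := \mathfrak{F}\!\big(w/(\nu+1),\dots,w/(\nu+n)\big)$ and set $Q_n := \mathfrak{F}_n\prod_{i=1}^{n}(\nu+i)$. Specializing $x_k = w/(\nu+k)$ in (\ref{eq:F_inverse_recur}) and clearing denominators shows that the sequence $(Q_n)$ obeys
\[
Q_n = (\nu+n)\,Q_{n-1} - w^{2}\,Q_{n-2}\qquad(n\geq 2),\qquad Q_0 = 1,\ Q_1 = \nu+1.
\]
On the other hand, using $\Gamma(\nu+1)/\Gamma(\nu+n+1) = \big(\prod_{i=1}^{n}(\nu+i)\big)^{-1}$ and $\prod_{j=s}^{n-1-s}(\nu+n-j) = \prod_{i=s+1}^{n-s}(\nu+i)$, the right-hand side of (\ref{eq:Fw_over_nu_trunc}) equals $\big(\prod_{i=1}^{n}(\nu+i)\big)^{-1}P_n$ with
\[
P_n := \sum_{s=0}^{[n/2]}(-1)^{s}\,\frac{(n-s)!}{s!\,(n-2s)!}\,w^{2s}\prod_{i=s+1}^{n-s}(\nu+i).
\]
Hence (\ref{eq:Fw_over_nu_trunc}) is precisely the equality $P_n = Q_n$, and since a solution of a two-term recurrence is uniquely determined by its first two values, it suffices to check that $P_n$ satisfies the very same recurrence and initial conditions as $Q_n$.

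The initial values $P_0 = 1$, $P_1 = \nu+1$ are immediate. For the recurrence I would expand $(\nu+n)P_{n-1}$ and $w^{2}P_{n-2}$ as sums of $(-1)^{s}w^{2s}$ times explicit coefficients, shift $s\mapsto s-1$ in the latter, and then factor $\prod_{i=s+1}^{n-1-s}(\nu+i)$ out of every term of $(\nu+n)P_{n-1} - w^{2}P_{n-2}$ and of $P_n$, using $\prod_{i=s}^{n-1-s}(\nu+i) = (\nu+s)\prod_{i=s+1}^{n-1-s}(\nu+i)$ and $\prod_{i=s+1}^{n-s}(\nu+i) = (\nu+n-s)\prod_{i=s+1}^{n-1-s}(\nu+i)$. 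After cancelling the common factor $(n-1-s)!/\big(s!\,(n-2s)!\big)$, the required equality collapses, coefficient by coefficient in $s$, to the elementary polynomial identity
\[
(n-s)(\nu+n-s) = (n-2s)(\nu+n) + s(\nu+s),
\]
which is verified by multiplying out. The only genuinely fiddly point is keeping track of the ranges of the index $s$ and of the top term when $n$ is even; there the superfluous binomial coefficients vanish under the convention $1/k! = 0$ for $k<0$, so the reduction goes through unchanged. This establishes $P_n = Q_n$ and hence (\ref{eq:Fw_over_nu_trunc}).

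Finally, (\ref{eq:Fw_over_n_trunc}) is obtained from (\ref{eq:Fw_over_nu_trunc}) by replacing $\nu$ with $m$ and $n$ with $n-m$: the sequence $w/(m+1),\dots,w/n$ is exactly $\{w/(m+k)\}_{k=1}^{n-m}$, so (\ref{eq:Fw_over_nu_trunc}) applies with these substitutions, and simplifying $\Gamma(m+1)/\Gamma(n+1) = m!/n!$ together with $\prod_{j=s}^{n-m-1-s}(n-j) = (n-s)!/(m+s)!$ yields the stated formula. Apart from the indexing care already mentioned, I expect no real obstacle: the whole argument is an induction whose inductive step reduces to a one-line algebraic identity.
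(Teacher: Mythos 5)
Your argument is correct, and it takes a genuinely different route from the paper's. You reduce (\ref{eq:Fw_over_nu_trunc}) to the three-term recurrence $Q_{n}=(\nu+n)Q_{n-1}-w^{2}Q_{n-2}$ obtained by specializing $x_{k}=w/(\nu+k)$ in (\ref{eq:F_inverse_recur}) and clearing denominators, and then verify that the proposed closed form satisfies the same recurrence and initial data; the inductive step does collapse, after factoring out $\prod_{i=s+1}^{n-1-s}(\nu+i)$, to the identity $(n-s)(\nu+n-s)=(n-2s)(\nu+n)+s(\nu+s)$, which checks out, and the boundary terms ($s=0$, and $s=n/2$ for $n$ even, where the factored product degenerates but the two surviving contributions still match) are handled by the conventions you state. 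The paper instead works directly with the defining nested sums (\ref{eq:defF}): it proves two auxiliary summation identities, (\ref{eq:Fw_over_n_trunc_aux1}) and (\ref{eq:Fw_over_n_trunc_aux2}), by induction on $s$, thereby evaluating in closed form each $s$-fold truncated sum of the products $\big((\nu+k_{1})(\nu+k_{1}+1)\cdots(\nu+k_{s})(\nu+k_{s}+1)\big)^{-1}$, and then sums over $s$. Your route is shorter and leans only on the recurrence already available from Proposition~\ref{thm:F_T_recur} via (\ref{eq:F_inverse_recur}); the paper's route is heavier but produces the explicit value of each individual term of the defining series, in the spirit of the computation for the Bessel identity (\ref{eq:BesselJ_rel_F}) in Example~\ref{rem:BesselJ_rel_F}. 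The passage to (\ref{eq:Fw_over_n_trunc}) by substituting $\nu=m$ and $n\mapsto n-m$ and simplifying $\prod_{j=s}^{n-m-1-s}(n-j)=(n-s)!/(m+s)!$ is the same in both treatments.
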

\begin{proof}
Firstly, the equality\begin{eqnarray}
 &  & \sum_{k=1}^{n}\,\frac{(n+1-k)(n+2-k)\dots(n+s-1-k)}{(\nu+k)(\nu+k+1)\dots(\nu+k+s)}\nonumber \\
 &  & =\frac{n\,(n+1)\dots(n+s-1)}{s\,(\nu+n+s)\,(\nu+1)(\nu+2)\dots(\nu+s)}\label{eq:Fw_over_n_trunc_aux1}\end{eqnarray}
holds for all $n\in\mathbb{Z}_{+}$, $\nu\in\mathbb{C}$, $\nu\notin-\mathbb{N}$,
and $s\in\mathbb{N}$. To show (\ref{eq:Fw_over_n_trunc_aux1}) one
can proceed by induction in $s$. The case $s=1$ is easy to verify.
For the induction step from $s-1$ to $s$, with $s>1$, let us denote
the LHS of (\ref{eq:Fw_over_n_trunc_aux1}) by $Y_{s}(\nu,n)$. One
observes that \[
Y_{s}(\nu,n)=\frac{\nu+n+s-1}{s}\, Y_{s-1}(\nu,n)-\frac{\nu+n+2s-1}{s}\, Y_{s-1}(\nu+1,n).\]
Applying the induction hypothesis the equality readily follows.

Next one shows that\begin{eqnarray}
 &  & \hs{-2em}\sum_{k_{1}=1}^{n-2s+2}\,\sum_{k_{2}=k_{1}+2}^{n-2s+4}\,\dots\,\sum_{k_{s}=k_{s-1}+2}^{n}\nonumber \\
 &  & \times\,\frac{1}{(\nu+k_{1})(\nu+k_{1}+1)(\nu+k_{2})(\nu+k_{2}+1)\dots(\nu+k_{s})(\nu+k_{s}+1)}\label{eq:Fw_over_n_trunc_aux2}\\
 &  & \hs{-2em}=\,\frac{(n-2s+2)(n-2s+3)\dots(n-s+1)}{s!\,(\nu+1)(\nu+2)\dots(\nu+s)\,(\nu+n-s+2)(\nu+n-s+3)\dots(\nu+n+1)}\nonumber \end{eqnarray}
holds for all $n\in\mathbb{Z}_{+}$, $s\in\mathbb{N}$, $2s\leq n+2$.
To this end, we again proceed by induction in $s$. The case $s=1$
is easy to verify. In the induction step from $s-1$ to $s$, with
$s>1$, one applies the induction hypothesis to the LHS of (\ref{eq:Fw_over_n_trunc_aux2})
and arrives at the expression\begin{eqnarray*}
 &  & \hs{-1.5em}\sum_{k=1}^{n-2s+2}\,\frac{1}{(\nu+k)(\nu+k+1)\,(s-1)!}\\
 &  & \hs{-1em}\times\,\frac{(n-k-2s+3)(n-k-2s+4)\dots(n-k-s+1)}{(\nu+k+2)(\nu+k+3)\dots(\nu+k+s)\,(\nu+n-s+3)(\nu+n-s+4)\dots(\nu+n+1)}.\end{eqnarray*}
Using (\ref{eq:Fw_over_n_trunc_aux1}) one obtains the RHS of (\ref{eq:Fw_over_n_trunc_aux2}),
as claimed.

Finally, to conclude the proof, it suffices to notice that\begin{eqnarray*}
 &  & \hs{-1.5em}\mathfrak{F}\!\left(\frac{w}{\nu+1},\frac{w}{\nu+2},\dots,\frac{w}{\nu+n}\right)=\,1+\sum_{s=1}^{[n/2]}\,(-1)^{s}\,\sum_{k_{1}=1}^{n-2s+1}\,\sum_{k_{2}=k_{1}+2}^{n-2s+3}\dots\sum_{k_{s}=k_{s-1}+2}^{n-1}\\
 &  & \hs{6em}\times\,\frac{w^{2s}}{(\nu+k_{1})(\nu+k_{1}+1)(\nu+k_{2})(\nu+k_{2}+1)\dots(\nu+k_{s})(\nu+k_{s}+1)}\end{eqnarray*}
and to use equality (\ref{eq:Fw_over_n_trunc_aux2}).
\end{proof}
One can complete Proposition~\ref{thm:Fw_over_n_trunc} with another
relation to Bessel functions.
\begin{prop}
\label{thm:BesselJY_rel_F} For $m,n\in\mathbb{Z}_{+}$, $m\leq n$,
one has\begin{eqnarray}
 &  & \pi J_{m}(2w)Y_{n+1}(2w)\,=\,-\frac{n!}{m!}\, w^{m-n-1}\,\mathfrak{F}\!\left(\frac{w}{m+1},\frac{w}{m+2},\dots,\frac{w}{n}\right)\label{eq:BesselJY_rel_F}\\
 &  & \qquad-\,\sum_{s=0}^{m-1}\frac{(m-s-1)!\,(n-m+2s+1)!}{s!\,(n+s+1)!\,(n-m+s+1)!}\, w^{n-m+2s+1}+O\big(w^{m+n+1}\log(w)\big).\nonumber \end{eqnarray}
\end{prop}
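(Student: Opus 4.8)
The plan is to derive the formula by combining the $\mathfrak{F}$-algebra already at hand with two classical facts about cylinder functions: the cross-product (Wronskian) relation and the Frobenius-type expansion of $Y_\mu$ at the origin. As a first step I would apply the generalized recurrence (\ref{eq:F_T_recur_k}) with $k=n-m$ to the sequence $x=\{w/(m+j)\}_{j=1}^{\infty}$. Using Example~\ref{rem:BesselJ_rel_F} to rewrite $\mathfrak{F}(x)$, $\mathfrak{F}(T^{n-m}x)$ and $\mathfrak{F}(T^{n-m+1}x)$ as multiples of $J_m(2w)$, $J_n(2w)$ and $J_{n+1}(2w)$, and noting that $\mathfrak{F}(x_1,\dots,x_{n-m})=\mathfrak{F}(w/(m+1),\dots,w/n)$, one turns (\ref{eq:F_T_recur_k}) into a Lommel-type identity
\begin{equation*}
J_m(2w)=P(2w)J_n(2w)+Q(2w)J_{n+1}(2w),\qquad P(2w)=\frac{n!}{m!}\,w^{m-n}\,\mathfrak{F}\!\left(\frac{w}{m+1},\dots,\frac{w}{n}\right),
\end{equation*}
with $Q(2w)$ a similar multiple of $\mathfrak{F}(w/(m+1),\dots,w/(n-1))$. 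Since $J_n(2w)$ and $J_{n+1}(2w)$ are linearly independent over the field of rational functions in $w$, the coefficients $P,Q$ are uniquely determined, hence coincide with the (solution-independent) Lommel polynomials obtained by iterating the three-term recurrence in the order downwards from $n+1$ and $n$; therefore $Y_m(2w)=P(2w)Y_n(2w)+Q(2w)Y_{n+1}(2w)$ as well.

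Substituting both factorizations into $J_m(2w)Y_{n+1}(2w)-Y_m(2w)J_{n+1}(2w)$ makes the $Q$-terms cancel, leaving $P(2w)\big(J_n(2w)Y_{n+1}(2w)-Y_n(2w)J_{n+1}(2w)\big)$; the bracket equals $-1/(\pi w)$ by the Wronskian relation $J_{\mu+1}(z)Y_\mu(z)-J_\mu(z)Y_{\mu+1}(z)=2/(\pi z)$ with $z=2w$. This yields the \emph{exact} identity
\begin{equation*}
\pi J_m(2w)Y_{n+1}(2w)=-\frac{n!}{m!}\,w^{m-n-1}\,\mathfrak{F}\!\left(\frac{w}{m+1},\dots,\frac{w}{n}\right)+\pi J_{n+1}(2w)Y_m(2w),
\end{equation*}
which already contains the first term of (\ref{eq:BesselJY_rel_F}); it remains to expand $\pi J_{n+1}(2w)Y_m(2w)$ modulo $w^{m+n+1}\log w$.

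For the last term I would insert the classical expansion $Y_m(2w)=\tfrac{2}{\pi}\log(w)J_m(2w)-\tfrac{1}{\pi}w^{-m}\sum_{k=0}^{m-1}\tfrac{(m-1-k)!}{k!}w^{2k}-\tfrac{1}{\pi}w^{m}\,(\text{entire series in }w^2)$ and multiply by $\pi J_{n+1}(2w)$. Since $J_{n+1}(2w)J_m(2w)=O(w^{m+n+1})$ and $J_{n+1}(2w)=O(w^{n+1})$, the logarithmic term is $O(w^{m+n+1}\log w)$ and the entire-series term is $O(w^{m+n+1})$; both are absorbed into the error. The surviving piece $-J_{n+1}(2w)\,w^{-m}\sum_{k=0}^{m-1}\tfrac{(m-1-k)!}{k!}w^{2k}$, after expanding $J_{n+1}(2w)=\sum_{l\ge0}\tfrac{(-1)^l}{l!(n+1+l)!}w^{2l+n+1}$, is a power series starting at $w^{n-m+1}$, of which only the coefficients of $w^{n-m+1},w^{n-m+3},\dots,w^{m+n-1}$ matter. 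A short reindexing exhibits the coefficient of $w^{n-m+2s+1}$ ($0\le s\le m-1$) as $-\tfrac{(m-1-s)!}{(n+1)!\,s!}\,{}_2F_1(-s,m-s;n+2;1)$, and the Chu--Vandermonde summation evaluates this to precisely $-\tfrac{(m-s-1)!\,(n-m+2s+1)!}{s!\,(n+s+1)!\,(n-m+s+1)!}$, i.e. the finite sum in (\ref{eq:BesselJY_rel_F}).

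I do not anticipate any genuine obstacle; the delicate points are only organizational. The first is the transfer of the factorization from $J$ to $Y$ in the first step---the usual Lommel-polynomial argument, and it is here that the hypothesis $m\le n$ is used so that the downward iteration of the recurrence reaches order $m$. The second is the bookkeeping of powers of $w$ in the third step, so as to retain exactly the terms not absorbed by the $O(w^{m+n+1}\log w)$ remainder. The Chu--Vandermonde evaluation can alternatively be replaced by an induction on $s$ using (\ref{eq:F_finite_recur}), in the style of the proof of Proposition~\ref{thm:Fw_over_n_trunc}. Finally, the borderline case $m=n$ is degenerate: then $\mathfrak{F}$ of the empty sequence equals $1$, $P\equiv1$, $Q\equiv0$, and (\ref{eq:BesselJY_rel_F}), whose finite sum is then empty, reduces to the identity of the second step.
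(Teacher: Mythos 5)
Your argument is correct, and it is a genuinely different route from the one in the paper. The paper computes the full low-order expansion of $\pi J_m(z)Y_n(z)$ head-on, from the classical product series for $J_\mu(z)J_\nu(z)$ together with $\pi Y_n(z)=\partial_\nu\big(J_\nu(z)-(-1)^nJ_{-\nu}(z)\big)\big|_{\nu=n}$, obtaining an expansion with an explicit negative-power block, an explicit nonnegative-power block, and logarithmic/$\psi$-terms; it then identifies the negative-power block with $\mathfrak{F}\!\left(\frac{w}{m+1},\ldots,\frac{w}{n}\right)$ by matching coefficients against the explicit polynomial formula (\ref{eq:Fw_over_n_trunc}) of Proposition~\ref{thm:Fw_over_n_trunc}. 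You instead obtain the $\mathfrak{F}$-term as an \emph{exact} identity: (\ref{eq:F_T_recur_k}) with $k=n-m$ combined with (\ref{eq:BesselJ_rel_F}) gives $J_m=PJ_n+QJ_{n+1}$ with $P=\frac{n!}{m!}w^{m-n}\mathfrak{F}(\cdot)$, the Lommel-polynomial argument transfers this to $Y$, and the cross-product relation collapses $J_mY_{n+1}-Y_mJ_{n+1}$ to $-P/(\pi w)$; only the remainder $\pi J_{n+1}Y_m$ then needs expanding, and your Chu--Vandermonde evaluation of its coefficients is correct. What your route buys is that it bypasses Proposition~\ref{thm:Fw_over_n_trunc} altogether and isolates the error analysis in a single, cleaner product; what the paper's route buys is the full expansion (\ref{eq:BesselJY_series}) with explicit higher-order terms, which it reuses elsewhere (e.g.\ Remark~\ref{rem:BesselJY_n_eq_m-1} and the computation of $\chi_{\mathrm{red}}(0)$).

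One small inaccuracy in your closing remark: for $m=n\geq1$ the finite sum $\sum_{s=0}^{m-1}$ in (\ref{eq:BesselJY_rel_F}) is \emph{not} empty (it is empty only for $m=0$), so the statement does not reduce to your exact identity alone; one still needs the expansion of $\pi J_{n+1}(2w)Y_m(2w)$ from your last step. This is harmless, since steps three and four of your argument apply verbatim with $P\equiv1$, $Q\equiv0$, but the sentence as written is wrong.
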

\begin{proof}
Recall the following two facts from the theory of Bessel functions
(see, for instance, \cite[Chapter~VII]{Erdelyi_etal}). Firstly, for
$\mu,\nu\notin-\mathbb{N}$, one has \[
J_{\mu}(z)J_{\nu}(z)=\sum_{s=0}^{\infty}(-1)^{s}\,\frac{(s+\mu+\nu+1)_{s}}{s!\,\Gamma(\mu+s+1)\Gamma(\nu+s+1)}\left(\frac{z}{2}\right)^{\mu+\nu+2s}\]
where $(a)_{s}=a(a+1)\ldots(a+s-1)$ is the Pochhammer symbol. Secondly,
for $n\in\mathbb{Z}_{+}$,\[
\pi Y_{n}(z)=\frac{\partial}{\partial\nu}\left(J_{\nu}(z)-(-1)^{n}J_{-\nu}(z)\right)\Bigg|_{\nu=n}.\]
For $m,n\in\mathbb{Z}_{+}$, $m\leq n$, a straightforward computation
based on these facts yields\begin{eqnarray}
 &  & \hs{-2.5em}\pi J_{m}(z)Y_{n}(z)\,=\,-\,\sum_{s=0}^{[(n-m-1)/2]}\,(-1)^{s}\,\frac{(n-s-1)!\,(n-m-s-1)!}{s!\,(m+s)!\,(n-m-2s-1)!}\left(\frac{z}{2}\right)^{m-n+2s}\nonumber \\
 &  & \hs{-0.5em}-\,\sum_{s=0}^{m-1}\frac{(m-s-1)!\,(n-m+2s)!}{s!\,(n+s)!\,(n-m+s)!}\left(\frac{z}{2}\right)^{n-m+2s}\,+\,2J_{m}(z)J_{n}(z)\log\!\left(\frac{z}{2}\right)\label{eq:BesselJY_series}\\
 &  & \hs{-0.5em}+\,\sum_{s=0}^{\infty}(-1)^{s}\,\frac{(m+n+2s)!}{s!\,(m+s)!\,(n+s)!\,(m+n+s)!}\left(\frac{z}{2}\right)^{m+n+2s}\Big(2\psi(m+n+2s+1)\nonumber \\
 &  & \hs{5.4em}-\psi(m+s+1)-\psi(n+s+1)-\psi(m+n+s+1)-\psi(s+1)\Big)\nonumber \end{eqnarray}
where $\psi(z)=\Gamma'(z)/\Gamma(z)$ is the digamma function. The
proposition follows from (\ref{eq:BesselJY_series}) and (\ref{eq:Fw_over_n_trunc}).\end{proof}
\begin{rem}
\label{rem:BesselJY_n_eq_m-1} Note that the first term on the RHS
of (\ref{eq:BesselJY_rel_F}) contains only negative powers of $w$.
One can extend (\ref{eq:BesselJY_rel_F}) to the case $n=m-1$. Then\[
\pi J_{m}(2w)Y_{m}(2w)=-\,\sum_{s=0}^{m-1}\frac{(m-s-1)!\,(2s)!}{(s!)^{2}\,(m+s)!}\, w^{2s}+O\big(w^{2m}\log(w)\big).\]

\end{rem}

\section{The matrix $\mathfrak{J}$}

In this section we introduce an infinite matrix $\mathfrak{J}$ that
is basically determined by two simple properties -- it is antisymmetric
and its every row satisfies a second-order difference equation known
from the theory of Bessel functions. Of course, in that case every
column of the matrix satisfies the difference equation as well.
\begin{lem}
Suppose $w\in\mathbb{C}\setminus\{0\}$. The dimension of the vector
space formed by infinite-dimensional matrices $A=\{A(m,n)\}_{m,n\in\mathbb{Z}}$
satisfying, for all $m,n\in\mathbb{Z}$,\begin{equation}
wA(m,n-1)-nA(m,n)+wA(m,n+1)=0\label{eq:A_order2_diff_eq}\end{equation}
and\begin{equation}
A(n,m)=-A(m,n),\label{eq:A_antisymm}\end{equation}
equals $1$. Every such a matrix is unambiguously determined by the
value $A(0,1)$, and one has\begin{equation}
\forall n\in\mathbb{Z},\mbox{ }A(n,n+1)=A(0,1).\label{eq:A_pardiag_const}\end{equation}
\end{lem}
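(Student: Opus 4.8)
The plan is to reduce everything to two elementary facts: antisymmetry forces the main diagonal of $A$ to vanish, and (\ref{eq:A_order2_diff_eq}) is a second-order linear recurrence in the column index, so for each fixed $m$ the row $A(m,\cdot)$ is determined by any two consecutive of its entries (the solution space of that recurrence being spanned by the integer-indexed Bessel functions $J_n(2w)$ and $Y_n(2w)$). Antisymmetry is what cuts the otherwise huge solution set of (\ref{eq:A_order2_diff_eq}) down to one dimension. I would also note at the outset that, applying (\ref{eq:A_antisymm}) to (\ref{eq:A_order2_diff_eq}) and relabelling indices, $A$ automatically satisfies the analogous recurrence in its first index as well.

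The key step is (\ref{eq:A_pardiag_const}). From (\ref{eq:A_antisymm}) with $m=n$ one gets $A(m,m)=0$ for all $m$. Evaluating (\ref{eq:A_order2_diff_eq}) at $n=m$ and using $A(m,m)=0$ together with $w\neq0$ gives $A(m,m-1)=-A(m,m+1)$; combining this with the instance $A(m,m-1)=-A(m-1,m)$ of (\ref{eq:A_antisymm}) yields $A(m-1,m)=A(m,m+1)$. Hence the first superdiagonal is constant, so $A(m,m+1)=A(0,1)$ for every $m\in\mathbb{Z}$, which is (\ref{eq:A_pardiag_const}).

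Next I would show that the linear map $A\mapsto A(0,1)$ is injective, giving $\dim\leq1$. If $A(0,1)=0$, then by the previous step the whole superdiagonal vanishes, and by (\ref{eq:A_antisymm}) so do the subdiagonal and the diagonal. Writing (\ref{eq:A_order2_diff_eq}) as $A(m,n+1)=(n/w)A(m,n)-A(m,n-1)$ and starting from $A(m,m)=A(m,m+1)=0$, induction on $n$ gives $A(m,n)=0$ for all $n\geq m$; running the recurrence backwards from $A(m,m-1)=A(m,m)=0$ gives $A(m,n)=0$ for all $n\leq m$. Thus $A=0$, so every such matrix is unambiguously determined by $A(0,1)$ and the dimension is at most $1$.

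It remains to produce one nonzero such matrix, which is the only point requiring a little more than bookkeeping. A convenient choice is $A(m,n)=J_m(2w)Y_n(2w)-J_n(2w)Y_m(2w)$: it is antisymmetric by construction, it satisfies (\ref{eq:A_order2_diff_eq}) because both $J_n(2w)$ and $Y_n(2w)$ solve that recurrence in $n$, and $A(0,1)=J_0(2w)Y_1(2w)-J_1(2w)Y_0(2w)=-1/(\pi w)\neq0$ by the standard cross-product identity $J_{\nu+1}(z)Y_\nu(z)-J_\nu(z)Y_{\nu+1}(z)=2/(\pi z)$. (Alternatively, the explicit matrix $\mathfrak{J}$ constructed immediately below serves as the required example.) Together with the previous paragraph this shows the dimension equals $1$; the bound $\dim\leq1$ and formula (\ref{eq:A_pardiag_const}) are immediate, and exhibiting the concrete nonzero solution is the mildly nontrivial part.
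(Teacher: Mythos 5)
Your proof is correct, and the first half (constancy of the superdiagonal via $A(m,m)=0$ and the $n=m$ instance of the recurrence, then uniqueness by running the recurrence forward and backward from two vanishing consecutive entries) is essentially the paper's argument. Where you genuinely diverge is the existence half, which is the nontrivial part of the lemma. The paper builds $A$ row by row from the initial data $A(m,m)=0$, $A(m,m+1)=\lambda$ and then has to \emph{prove} antisymmetry; this requires first showing that the so-constructed $A$ also satisfies the recurrence in its first index, via the auxiliary sequence $x_{n}=-A(m-1,n)+\frac{m}{w}A(m,n)$, and only then comparing $B(m,n)=-A(n,m)$ with $A$ through the uniqueness of solutions. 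You instead exhibit the manifestly antisymmetric solution $A(m,n)=J_{m}(2w)Y_{n}(2w)-J_{n}(2w)Y_{m}(2w)$, for which the recurrence in $n$ is immediate and the nonvanishing $A(0,1)=-1/(\pi w)$ follows from the cross-product identity; this is shorter and cleaner but imports facts about $Y_{n}$ (including, for general complex $w$, a choice of branch of $Y_{n}(2w)$, which is harmless since any branch satisfies both the recurrence and the cross-product identity). Your fallback reference to $\mathfrak{J}$ should be dropped or replaced by the explicit formula (\ref{eq:Jmn}), since $\mathfrak{J}$ is \emph{defined} in the paper as the solution whose existence this lemma is establishing. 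The paper's longer route has the side benefit of deriving, from the recurrence in $n$ alone, the companion recurrence in $m$, whereas in your setup that recurrence is an immediate consequence of antisymmetry, as you correctly note at the outset.
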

\begin{proof}
Suppose $A$ solves (\ref{eq:A_order2_diff_eq}) and (\ref{eq:A_antisymm}).
Then $A(m,m)=0$. Equating $m=n$ in (\ref{eq:A_order2_diff_eq})
and using (\ref{eq:A_antisymm}) one finds that $A(n,n+1)=-A(n,n-1)=A(n-1,n)$.
Hence (\ref{eq:A_pardiag_const}) is fulfilled. Clearly, the matrix
$A$ is unambiguously determined by the second-order difference equation
(\ref{eq:A_order2_diff_eq}) in $n$ and by the initial conditions
$A(m,m)=0$, $A(m,m+1)=A(0,1)$, when $m$ runs through $\mathbb{Z}$.

Conversely, choose $\lambda\in\mathbb{C}$, $\lambda\neq0$. Let $A$
be the unique matrix determined by (\ref{eq:A_order2_diff_eq}) and
the initial conditions $A(m,m)=0$, $A(m,m+1)=\lambda$. It suffices
to show that $A$ satisfies (\ref{eq:A_antisymm}) as well. Note that
$A(m,m-1)=-\lambda$. Furthermore,\begin{eqnarray*}
 &  & wA(m-1,m+1)-mA(m,m+1)+wA(m+1,m+1)\\
 &  & =\, wA(m-1,m+1)-mA(m-1,m)+wA(m-1,m-1)\\
 &  & =\,0.\end{eqnarray*}
From (\ref{eq:A_order2_diff_eq}) and the initial conditions it follows
that $A(m,m+2)=(m+1)\lambda/w$, and so $mA(m,m+2)=(m+1)A(m-1,m+1)$.
Consequently,\begin{eqnarray*}
 &  & wA(m-1,m+2)-mA(m,m+2)+wA(m+1,m+2)\\
 &  & =\, wA(m-1,m+2)-(m+1)A(m-1,m+1)+wA(m-1,m)\\
 &  & =\,0.\end{eqnarray*}
One observes that, for a given $m\in\mathbb{Z}$, the sequence\[
x_{n}=-A(m-1,n)+\frac{m}{w}\, A(m,n),\mbox{ }n\in\mathbb{Z},\]
solves the difference equation \begin{equation}
wx_{n-1}-nx_{n}+wx_{n+1}=0\label{eq:diff_eq_x}\end{equation}
 with the initial conditions $x_{m+1}=A(m+1,m+1)$, $x_{m+2}=A(m+1,m+2)$.
By the uniqueness, $x_{n}=A(m+1,n)$. This means that, for all $m,n\in\mathbb{Z}$,\[
wA(m-1,n)-mA(m,n)+wA(m+1,n)=0.\]
Put $B(m,n)=-A(n,m)$. Then $B$ fulfills (\ref{eq:A_order2_diff_eq})
and $B(m,m)=0$, $B(m,m+1)=\lambda$. Whence $B=A$.\end{proof}
\begin{lem}
\label{thm:A_odd_each_var} Suppose $w\in\mathbb{C}\setminus\{0\}$.
If a matrix $A=\{A(m,n)\}_{m,n\in\mathbb{Z}}$ satisfies (\ref{eq:A_order2_diff_eq})
and (\ref{eq:A_antisymm}) then\begin{equation}
\forall m,n\in\mathbb{Z},\mbox{ }A(m,-n)=(-1)^{n}A(m,n),\mbox{ }A(-m,n)=(-1)^{m}A(m,n).\label{eq:A_odd_each_var}\end{equation}
\end{lem}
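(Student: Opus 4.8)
The plan is to exploit the symmetry of the defining difference equation \eqref{eq:A_order2_diff_eq} under the reflection $n\mapsto -n$. Fix $w\in\mathbb{C}\setminus\{0\}$ and suppose $A$ satisfies \eqref{eq:A_order2_diff_eq} and \eqref{eq:A_antisymm}. For a fixed $m\in\mathbb{Z}$ introduce the reflected sequence $\tilde{x}_n = (-1)^n A(m,-n)$, $n\in\mathbb{Z}$. A direct substitution shows that $\tilde{x}$ again solves the scalar equation \eqref{eq:diff_eq_x}: indeed
\[
w\tilde{x}_{n-1}-n\tilde{x}_n+w\tilde{x}_{n+1}
=(-1)^{n-1}\bigl(wA(m,-n+1)+nA(m,-n)+wA(m,-n-1)\bigr),
\]
and replacing the summation index $-n$ by $n'$ turns the bracket into $wA(m,n'-1)-n'A(m,n')+wA(m,n'+1)=0$ by \eqref{eq:A_order2_diff_eq}. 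So $\tilde{x}_n$ and $A(m,n)$ both solve the same second-order recurrence \eqref{eq:diff_eq_x}; by uniqueness it suffices to match them at two consecutive indices.

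To fix the initial data I would evaluate at $n=0$ and $n=1$. First, $\tilde{x}_0 = A(m,0)$ trivially. Second, I need $\tilde{x}_{1} = -A(m,-1) = A(m,1)$, i.e. $A(m,-1)=-A(m,1)$ for every $m$. This is the one place where genuine work is needed, and it is the main (mild) obstacle. To get it, equate indices in \eqref{eq:A_order2_diff_eq} with $n=0$: $wA(m,-1)-0\cdot A(m,0)+wA(m,1)=0$, hence $A(m,-1)=-A(m,1)$ since $w\neq 0$. With both initial values matched, uniqueness of solutions of \eqref{eq:diff_eq_x} gives $\tilde{x}_n = A(m,n)$ for all $n$, that is $A(m,-n)=(-1)^n A(m,n)$, which is the first relation in \eqref{eq:A_odd_each_var}.

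The second relation, $A(-m,n)=(-1)^m A(m,n)$, follows immediately from the first together with antisymmetry \eqref{eq:A_antisymm}: using \eqref{eq:A_antisymm} twice and the first relation applied with the roles of the arguments adjusted,
\[
A(-m,n) = -A(n,-m) = -(-1)^m A(n,m) = (-1)^m A(m,n),
\]
where the middle equality is the already-proved first relation \eqref{eq:A_odd_each_var} with the row index $n$ fixed. This closes the argument; no further case analysis is required, and the only computational ingredient is the trivial $n=0$ specialization of the recurrence.
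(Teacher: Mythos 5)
Your proposal is correct and takes essentially the same route as the paper: the paper's proof observes that every solution of the scalar recurrence (\ref{eq:diff_eq_x}) satisfies $x_{-n}=(-1)^{n}x_{n}$ by induction, the $n=0$ instance forcing $x_{-1}=-x_{1}$, while you package the very same mechanism as ``the reflected sequence solves the same equation and agrees at two consecutive indices, hence everywhere by uniqueness.'' Your derivation of the second relation from the first via antisymmetry is also exactly what the paper leaves implicit.
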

\begin{proof}
For any sequence $\{x_{n}\}_{n\in\mathbb{Z}}$ satisfying the difference
equation (\ref{eq:diff_eq_x}) one can verify, by mathematical induction,
that $x_{-n}=(-1)^{n}x_{n}$, $n=0,1,2,\ldots$.\end{proof}
\begin{defn}
For a given parameter $w\in\mathbb{C}\setminus\{0\}$ let $\mathfrak{J}=\{\mathfrak{J}(m,n)\}_{m,n\in\mathbb{Z}}$
denote the unique matrix satisfying (\ref{eq:A_order2_diff_eq}),
(\ref{eq:A_antisymm}) and $\mathfrak{J}(m,m+1)=1$, $\forall m\in\mathbb{Z}$.\end{defn}
\begin{rem}
Here are several particular entries of the matrix $\mathfrak{J}$,\[
\mathfrak{J}(m,m)=0,\,\mathfrak{J}(m,m+1)=1,\,\mathfrak{J}(m,m+2)=\frac{m+1}{w},\,\mathfrak{J}(m,m+3)=\frac{(m+1)(m+2)}{w^{2}}-1,\]
with $m\in\mathbb{Z}$. Some other particular values follow from (\ref{eq:A_antisymm})
and (\ref{eq:A_odd_each_var}). Below, in Proposition~\ref{thm:Jmn},
we derive a general formula for $\mathfrak{J}(m,n)$.\end{rem}
\begin{lem}
For $0\leq m<n$ one has (with the convention $\mathfrak{F}(\emptyset)$=1)\begin{equation}
\mathfrak{J}(m,n)=\frac{(n-1)!}{m!}\, w^{m-n+1}\,\mathfrak{F}\!\left(\frac{w}{m+1},\frac{w}{m+2},\ldots,\frac{w}{n-1}\right).\label{eq:J_rel_F}\end{equation}
\end{lem}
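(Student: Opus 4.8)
The plan is to fix $m\in\mathbb{Z}_{+}$ and to regard both sides of (\ref{eq:J_rel_F}) as sequences in the single variable $n\in\{m+1,m+2,m+3,\ldots\}$. Since $w\neq0$, the three-term recurrence (\ref{eq:A_order2_diff_eq}), read in $n$ with $m$ held fixed, determines $\mathfrak{J}(m,n)$ for all $n>m+1$ once the two consecutive values $\mathfrak{J}(m,m+1)$ and $\mathfrak{J}(m,m+2)$ are prescribed. Hence it suffices to verify that the right-hand side of (\ref{eq:J_rel_F}), call it $R(n)$, (i)~agrees with $\mathfrak{J}(m,\cdot)$ at $n=m+1$ and $n=m+2$, and (ii)~satisfies the recurrence $wR(n-1)-nR(n)+wR(n+1)=0$ for every $n\geq m+2$.

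For (i), at $n=m+1$ the argument of $\mathfrak{F}$ is the empty sequence, whence $R(m+1)=\frac{m!}{m!}\,w^{0}\cdot1=1=\mathfrak{J}(m,m+1)$; at $n=m+2$ one gets $R(m+2)=\frac{(m+1)!}{m!}\,w^{-1}\,\mathfrak{F}(w/(m+1))=(m+1)/w$, which equals $\mathfrak{J}(m,m+2)$ (this value was already recorded in the remark preceding the lemma, and it follows directly from (\ref{eq:A_order2_diff_eq}) with index $m+1$ together with $\mathfrak{J}(m,m)=0$).

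For (ii), abbreviate $\Phi_{j}:=\mathfrak{F}(w/(m+1),w/(m+2),\ldots,w/j)$ for $j\geq m$, with the convention $\Phi_{m}=\mathfrak{F}(\emptyset)=1$, so that $R(n)=\frac{(n-1)!}{m!}\,w^{m-n+1}\,\Phi_{n-1}$. Substituting this into $wR(n-1)-nR(n)+wR(n+1)=0$ and dividing through by the common nonzero factor $\frac{(n-2)!}{m!}\,w^{m-n+1}$, the claim becomes
\[
w^{2}\,\Phi_{n-2}-n(n-1)\,\Phi_{n-1}+n(n-1)\,\Phi_{n}=0 .
\]
But applying the backward recursion (\ref{eq:F_inverse_recur}) to the finite sequence $(w/(m+1),\ldots,w/n)$, whose last two entries are $w/(n-1)$ and $w/n$, yields
\[
\Phi_{n}=\Phi_{n-1}-\frac{w}{n-1}\cdot\frac{w}{n}\,\Phi_{n-2}=\Phi_{n-1}-\frac{w^{2}}{n(n-1)}\,\Phi_{n-2},
\]
and multiplying by $n(n-1)$ gives precisely the displayed identity. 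Note that $n\geq m+2$ ensures that this sequence has at least two entries, so (\ref{eq:F_inverse_recur}) is legitimately invoked; for $n=m+2$ the term $\Phi_{n-2}=\Phi_{m}$ is read via the empty-sequence convention. With (i) and (ii) in hand, (\ref{eq:J_rel_F}) follows from the uniqueness recalled at the outset.

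There is no serious obstacle here: the argument is a short recurrence match, and the only points demanding a little care are the edge conventions for $\mathfrak{F}$ on the empty and one-term sequences (entering the two base cases and the term $\Phi_{n-2}$ when $n=m+2$) and the bookkeeping of the powers of $w$ and of the factorials when cancelling the common factor.
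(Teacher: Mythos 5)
Your proposal is correct and follows exactly the paper's own argument: match the two initial values at $n=m+1$ and $n=m+2$, then show the right-hand side satisfies the second-order difference equation in $n$ via (\ref{eq:F_inverse_recur}), and invoke uniqueness (which requires $w\neq0$, as guaranteed by the definition of $\mathfrak{J}$). You have merely written out the bookkeeping that the paper leaves implicit.
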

\begin{proof}
The RHS of (\ref{eq:J_rel_F}) equals 1 for $n=m+1$, and $(m+1)/w$
for $n=m+2$. Moreover, in view of (\ref{eq:F_inverse_recur}), the
RHS satisfies the difference equation (\ref{eq:diff_eq_x}) in the
index $n$.\end{proof}
\begin{rem}
From (\ref{eq:J_rel_F}) and (\ref{eq:BesselJ_rel_F}) it follows
that\[
\forall m\in\mathbb{Z},\mbox{ }\lim_{n\to\infty}\frac{w^{n-1}}{(n-1)!}\,\mathfrak{J}(m,n)=J_{m}(2w).\]
This is in agreement with the well known fact that, for any $w\in\mathbb{C}$,
the sequence $\{J_{n}(2w)\}_{n\in\mathbb{Z}}$ fulfills the second-order
difference equation (\ref{eq:diff_eq_x}).
\end{rem}
\begin{rem}
Rephrasing Proposition~\ref{thm:BesselJY_rel_F} and Remark~\ref{rem:BesselJY_n_eq_m-1}
one has, for $m,n\in\mathbb{Z}_{+}$, $m\leq n$,\begin{eqnarray*}
\pi J_{m}(2w)Y_{n}(2w) & = & -w^{-1}\mathfrak{J}(m,n)-\,\sum_{s=0}^{m-1}\frac{(m-s-1)!\,(n-m+2s)!}{s!\,(n+s)!\,(n-m+s)!}\, w^{n-m+2s}\\
 &  & +\, O\big(w^{m+n}\log(w)\big).\end{eqnarray*}

\end{rem}
Since, by definition, the matrix $\mathfrak{J}$ is antisymmetric
it suffices to determine the values $\mathfrak{J}(m,n)$ for $m\leq n$,
$m,n\in\mathbb{Z}$. In the derivations to follow as well as in the
remainder of the paper we use the Newton symbol in the usual sense,
i.e. for any $z\in\mathbb{C}$ and $n\in\mathbb{Z}_{+}$ we put\[
\binom{z}{n}=\frac{z(z-1)\ldots(z-n+1)}{n!}\,.\]

\begin{prop}
\label{thm:Jmn} For $m,n\in\mathbb{Z}$, $m\leq n$, one has\begin{equation}
\mathfrak{J}(m,n)=\sum_{s=0}^{\left[(n-m-1)/2\right]}\,(-1)^{s}\binom{n-s-1}{n-m-2s-1}\frac{(n-m-s-1)!}{s!}\, w^{m-n+2s+1}.\label{eq:Jmn}\end{equation}
\end{prop}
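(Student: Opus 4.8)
The plan is to recast (\ref{eq:Jmn}) as a single recurrence identity, to verify that identity first for $m\ge 0$, where the representation (\ref{eq:J_rel_F}) is available, and then to push it to all integers by a polynomiality argument. To this end, for an integer $d\ge 0$ and $m\in\mathbb{Z}$ set
\[
f_{d}(m)=\sum_{s=0}^{[(d-1)/2]}(-1)^{s}\binom{m+d-s-1}{d-2s-1}\,\frac{(d-s-1)!}{s!}\,w^{2s-d+1},
\]
which is exactly the right-hand side of (\ref{eq:Jmn}) written with $d=n-m$. Since $\binom{m+d-s-1}{d-2s-1}$ is, for fixed $d$ and $s$, a polynomial in $m$ of degree $d-2s-1$, every $f_{d}$ is a polynomial in $m$ (with coefficients in $\mathbb{C}[w,w^{-1}]$); moreover $f_{0}\equiv 0$ and $f_{1}\equiv 1$. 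On the other hand, for a fixed $m$ the sequence $d\mapsto\mathfrak{J}(m,m+d)$ obeys, by (\ref{eq:A_order2_diff_eq}), the three-term recurrence $w\mathfrak{J}(m,m+d-1)-(m+d)\mathfrak{J}(m,m+d)+w\mathfrak{J}(m,m+d+1)=0$ for $d\ge 1$, and it starts from $\mathfrak{J}(m,m)=0$, $\mathfrak{J}(m,m+1)=1$. Since a solution of such a recurrence is determined by its first two terms, to prove (\ref{eq:Jmn}) for all $m\le n$ it therefore suffices to check that $f$ satisfies the same recurrence, i.e.
\[
w\,f_{d-1}(m)-(m+d)\,f_{d}(m)+w\,f_{d+1}(m)=0,\qquad d\ge 1,\ m\in\mathbb{Z}.
\]

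I would establish this last identity first in the range $m\ge 0$. For $0\le m<n$ one substitutes (\ref{eq:Fw_over_n_trunc}), with $n$ there replaced by $n-1$ (admissible, since then $0\le m\le n-1$ and $n-1\in\mathbb{Z}_{+}$), into the formula (\ref{eq:J_rel_F}) for $\mathfrak{J}(m,n)$; the prefactors $(n-1)!/m!$ and $m!/(n-1)!$ cancel, and after rewriting the summand with the help of $\binom{n-s-1}{n-m-2s-1}=(n-s-1)!/[(m+s)!\,(n-m-2s-1)!]$ one arrives precisely at $\mathfrak{J}(m,n)=f_{n-m}(m)$. For $m=n$ both sides are $0$. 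Hence $f_{n-m}(m)=\mathfrak{J}(m,n)$ for all $0\le m\le n$, and consequently the displayed recurrence for $f$ — which is just (\ref{eq:A_order2_diff_eq}) for $\mathfrak{J}$ rewritten in terms of $d$ — holds for every integer $m\ge 0$ and every $d\ge 1$.

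It remains to remove the restriction $m\ge 0$. For each fixed $d\ge 1$ the left-hand side $w\,f_{d-1}(m)-(m+d)\,f_{d}(m)+w\,f_{d+1}(m)$ is a polynomial in $m$ which, by the previous step, vanishes at every non-negative integer; a polynomial with infinitely many zeros is identically zero, so the recurrence holds for all $m\in\mathbb{Z}$, and by the first paragraph (\ref{eq:Jmn}) follows. The only real computation here is the cancellation of factorials leading to $\mathfrak{J}(m,n)=f_{n-m}(m)$ for $m\ge 0$, and it is entirely routine; the one idea worth isolating — forced on us because the $\mathfrak{F}$-representation (\ref{eq:J_rel_F}) breaks down for negative $m$ — is that $\mathfrak{J}(m,n)$ depends polynomially on $m$ once the distance $n-m$ is fixed, so an identity true for all $m\ge 0$ is automatically true for all $m\in\mathbb{Z}$.
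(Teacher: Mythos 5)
Your proof is correct, and although it shares its base case with the paper's, the mechanism you use to extend the formula to negative indices is genuinely different. Both arguments begin identically: for $0\le m<n$ one combines (\ref{eq:J_rel_F}) with (\ref{eq:Fw_over_n_trunc}) and collapses the factorials into the Newton symbol $\binom{n-s-1}{n-m-2s-1}$, and for $m=n$ both sides are the empty sum. From there the paper proceeds by cases, writing $m=-k$ (and then also $n=-\ell$) and invoking the parity relations (\ref{eq:A_odd_each_var}) together with antisymmetry; this requires checking that the low-index summands of the claimed formula vanish and that the surviving sum reindexes to $(-1)^{k}\mathfrak{J}(k,n)$ and its analogues. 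You instead fix $d=n-m$, observe that the right-hand side of (\ref{eq:Jmn}) is then a polynomial in $m$ (because $\binom{m+d-s-1}{d-2s-1}$ is, with $d-2s-1\ge 0$ throughout the summation range), and transfer the three-term recurrence (\ref{eq:A_order2_diff_eq}) --- already known to hold for your candidate $f_{d}(m)$ when $m\ge 0$ by the base case --- to all $m\in\mathbb{Z}$ via the fact that a polynomial with infinitely many zeros vanishes identically; uniqueness of solutions of the recurrence (legitimate since $w\ne 0$) with initial data $f_{0}=0=\mathfrak{J}(m,m)$ and $f_{1}=1=\mathfrak{J}(m,m+1)$ then closes the argument. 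Your route buys a clean, case-free extension at the price of one structural observation (polynomiality in $m$ along a fixed off-diagonal), whereas the paper's route is more computational but stays entirely within explicit sum manipulations and reuses the symmetry lemma it has already established. Both are complete proofs.
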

\begin{proof}
We distinguish several cases. First, consider the case $0\leq m<n$.
Then (\ref{eq:Jmn}) follows from (\ref{eq:J_rel_F}) and (\ref{eq:Fw_over_n_trunc}).
Observe also that for $m=n$, $m,n\in\mathbb{Z}$, the RHS of (\ref{eq:Jmn})
is an empty sum and so the both sides in (\ref{eq:Jmn}) are equal
to $0$.

Second, consider the case $m\leq0\leq n$. Put $m=-k$, $k\in\mathbb{Z}_{+}$.
The RHS of (\ref{eq:Jmn}) becomes\begin{equation}
\sum_{s=0}^{\left[(n+k-1)/2\right]}\,(-1)^{s}\binom{n-s-1}{n+k-2s-1}\frac{(n+k-s-1)!}{s!}\, w^{-k-n+2s+1}.\label{eq:Jmn_aux1}\end{equation}
Suppose $k\leq n$. Then the summands in (\ref{eq:Jmn_aux1}) vanish
for $s=0,1,\ldots,k-1$, and so the sum equals\[
\sum_{s=0}^{\left[(n-k-1)/2\right]}\,(-1)^{s+k}\frac{(n-k-s-1)!}{(n-k-2s-1)!\, s!}\,\frac{(n-s-1)!}{(s+k)!}\, w^{k-n+2s+1}.\]
By the first step, this expression is equal to $(-1)^{k}\mathfrak{J}(k,n)=\mathfrak{J}(-k,n)$
(see Lemma~\ref{thm:A_odd_each_var}). Further, suppose $k\geq n$.
Then the summands in (\ref{eq:Jmn_aux1}) vanish for $s=0,1,\ldots,n-1$,
and so the sum equals\[
\sum_{s=0}^{\left[(k-n-1)/2\right]}\,(-1)^{n+s}\binom{-s-1}{k-n-2s-1}\frac{(k-s-1)!}{(n+s)!}\, w^{n-k+2s+1}.\]
Using once more the first step, this expression is readily seen to
be equal to\linebreak $(-1)^{k+1}\mathfrak{J}(n,k)=\mathfrak{J}(-k,n)$.

Finally, consider the case $m\leq n\leq0$. Put $m=-k$, $n=-\ell$,
$k,\ell\in\mathbb{Z}_{+}$. Hence $0\leq\ell\leq k$. The RHS of (\ref{eq:Jmn})
becomes\[
\sum_{s=0}^{\left[(k-\ell-1)/2\right]}\,(-1)^{s}\binom{-\ell-s-1}{k-\ell-2s-1}\frac{(k-\ell-s-1)!}{s!}\, w^{\ell-k+2s+1}.\]
Using again the first step, this expression is readily seen to be
equal to $(-1)^{k+\ell+1}\mathfrak{J}(\ell,k)\newline=\mathfrak{J}(-k,-\ell)$.
\end{proof}

\section{The characteristic function for the antisymmetric diagonal}

For a given $d\in\mathbb{Z}_{+}$ let $E_{\pm}$ denote the $(2d+1)\times(2d+1)$
matrix with units on the upper (lower) parallel to the diagonal and
with all other entries equal to zero. Hence\[
(E_{+})_{j,k}=\delta_{j+1,k},\mbox{ }(E_{-})_{j,k}=\delta_{j,k+1},\mbox{ }j,k=-d,-d+1,-d+2,\ldots,d.\]
For $y=(y_{-d},y_{-d+1},y_{-d+2},\dots,y_{d})\in\mathbb{C}^{2d+1}$
let $\diag(y)$ denote the diagonal $(2d+1)\times(2d+1)$ matrix with
the sequence $y$ on the diagonal. Everywhere in what follows, $I$
stands for a unit matrix.

First a formula is presented for the determinant of a Jacobi matrix
with a general diagonal but with constant neighboring parallels to
the diagonal. As explained in the subsequent remark, however, this
formula can be extended to the general case with the aid of a simple
decomposition of the Jacobi matrix in question.
\begin{prop}
For $d\in\mathbb{N}$, $w\in\mathbb{C}$ and $y=(y_{-d},y_{-d+1},y_{-d+2},\ldots,y_{d})\in\mathbb{C}^{2d+1}$,
$\prod_{k=1}^{d}y_{k}y_{-k}\neq0$, one has\begin{eqnarray}
 &  & \hs{-1.5em}\det\!\big(\diag(y)+wE_{+}+wE_{-}\big)=\left(\prod_{k=1}^{d}y_{k}y_{-k}\right)\!\bigg[y_{0}\,\mathfrak{F}\!\left(\frac{w}{y_{1}},\dots,\frac{w}{y_{d}}\right)\mathfrak{F}\!\left(\frac{w}{y_{-1}},\dots,\frac{w}{y_{-d}}\right)\nonumber \\
 &  & \hs{0.3em}-\,\frac{w^{2}}{y_{1}}\,\mathfrak{F\!}\left(\frac{w}{y_{2}},\dots,\frac{w}{y_{d}}\right)\mathfrak{F}\!\left(\frac{w}{y_{-1}},\dots,\frac{w}{y_{-d}}\right)-\frac{w^{2}}{y_{-1}}\,\mathfrak{F}\!\left(\frac{w}{y_{1}},\dots,\frac{w}{y_{d}}\right)\mathfrak{F}\!\left(\frac{w}{y_{-2}},\dots,\frac{w}{y_{-d}}\right)\bigg]\!.\nonumber \\
 &  & \mbox{}\label{eq:det_Jacobid_gen}\end{eqnarray}
\end{prop}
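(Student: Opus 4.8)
The plan is to reduce the statement to two ingredients: a formula, in terms of $\mathfrak{F}$, for the determinant of a \emph{one-sided} tridiagonal matrix with constant off-diagonal parallels, and the classical identity that splits a tridiagonal determinant at an interior index.

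First I would record the auxiliary formula
\[
\det\!\big(\diag(a_{1},\dots,a_{n})+wE_{+}+wE_{-}\big)=\Big(\prod_{j=1}^{n}a_{j}\Big)\,\mathfrak{F}\!\Big(\frac{w}{a_{1}},\dots,\frac{w}{a_{n}}\Big),
\]
valid for $a_{1},\dots,a_{n}\in\mathbb{C}\setminus\{0\}$ (with the usual empty conventions for $n=0$). Denoting the left-hand side by $D_{n}$, expansion along the last row gives $D_{n}=a_{n}D_{n-1}-w^{2}D_{n-2}$ with $D_{0}=1$, $D_{1}=a_{1}$; multiplying (\ref{eq:F_inverse_recur}) with $x_{j}=w/a_{j}$ by $\prod_{j=1}^{n}a_{j}$ produces the very same recurrence with the very same initial values for the right-hand side, because $\big(\prod_{j=1}^{n}a_{j}\big)w^{2}/(a_{n-1}a_{n})=w^{2}\prod_{j=1}^{n-2}a_{j}$. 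The reversal symmetry (\ref{eq:Fx_reversed}) shows that the order in which the $a_{j}$ are listed is irrelevant.

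Next, for $-d\le i\le j\le d$ let $D_{[i,j]}$ denote the determinant of the principal submatrix of $\diag(y)+wE_{+}+wE_{-}$ indexed by $\{i,\dots,j\}$, with the convention $D_{[i,i-1]}=1$. Each such determinant satisfies the three-term recurrences $D_{[i,j]}=y_{i}D_{[i+1,j]}-w^{2}D_{[i+2,j]}$ and $D_{[i,j]}=y_{j}D_{[i,j-1]}-w^{2}D_{[i,j-2]}$ (expansion along the first, resp.\ the last, row), and a short induction on the cut position then gives the splitting identity
\[
D_{[-d,d]}=D_{[-d,-1]}\,D_{[0,d]}-w^{2}\,D_{[-d,-2]}\,D_{[1,d]}.
\]
By hypothesis the diagonals of the submatrices on $\{-d,\dots,-1\}$, $\{-d,\dots,-2\}$ and $\{1,\dots,d\}$ avoid $0$, so the auxiliary formula applies to them directly and yields, respectively, $\big(\prod_{k=1}^{d}y_{-k}\big)\mathfrak{F}(w/y_{-1},\dots,w/y_{-d})$, $\big(\prod_{k=2}^{d}y_{-k}\big)\mathfrak{F}(w/y_{-2},\dots,w/y_{-d})$ and $\big(\prod_{k=1}^{d}y_{k}\big)\mathfrak{F}(w/y_{1},\dots,w/y_{d})$, the orientations being fixed via (\ref{eq:Fx_reversed}). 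The block $D_{[0,d]}$ contains $y_{0}$, which the hypothesis does not constrain, so rather than apply the auxiliary formula to it I would use $D_{[0,d]}=y_{0}D_{[1,d]}-w^{2}D_{[2,d]}$, whose right-hand side involves only blocks with nonzero diagonal; writing $\prod_{k=2}^{d}y_{k}=y_{1}^{-1}\prod_{k=1}^{d}y_{k}$ this gives
\[
D_{[0,d]}=\Big(\prod_{k=1}^{d}y_{k}\Big)\Big[y_{0}\,\mathfrak{F}\!\Big(\frac{w}{y_{1}},\dots,\frac{w}{y_{d}}\Big)-\frac{w^{2}}{y_{1}}\,\mathfrak{F}\!\Big(\frac{w}{y_{2}},\dots,\frac{w}{y_{d}}\Big)\Big].
\]

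To finish, I would substitute these four evaluations into the splitting identity. Every one of the three resulting terms acquires the common factor $\prod_{k=1}^{d}y_{k}y_{-k}$ — for the last term this uses $w^{2}\prod_{k=2}^{d}y_{-k}=(w^{2}/y_{-1})\prod_{k=1}^{d}y_{-k}$ — and collecting what is left reproduces exactly the bracket on the right-hand side of (\ref{eq:det_Jacobid_gen}). Beyond this bookkeeping, the only point carrying any content is the splitting identity itself; it is entirely classical, so I do not anticipate a genuine obstacle. An alternative route expands $\det\!\big(\diag(y)+wE_{+}+wE_{-}\big)$ along the central row of index $0$ and identifies the three cofactors, but that involves messier block combinatorics, so the cut identity is the cleaner path.
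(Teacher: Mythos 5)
Your argument is correct, but it follows a genuinely different route from the paper's. The paper proceeds by induction on $d$: it applies a Schur complement to the two outermost indices $\{-d-1,d+1\}$, which reduces the $(2d+3)$-dimensional determinant to a $(2d+1)$-dimensional one whose end diagonal entries are perturbed to $y_{\pm d}(1-w^{2}/(y_{\pm d}y_{\pm(d+1)}))$, and then absorbs that perturbation back into $\mathfrak{F}$ via the identity $(1-x_{n-1}x_{n})\,\mathfrak{F}\big(x_{1},\ldots,x_{n-2},x_{n-1}/(1-x_{n-1}x_{n})\big)=\mathfrak{F}(x_{1},\ldots,x_{n})$, a consequence of (\ref{eq:Fx_reversed}) and (\ref{eq:Fx_eq_Fxk_Fxrem}). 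You instead first establish the closed form $\det\big(\diag(a)+wE_{+}+wE_{-}\big)=\big(\prod_{j}a_{j}\big)\mathfrak{F}(w/a_{1},\ldots,w/a_{n})$ by matching the tridiagonal three-term recurrence against (\ref{eq:F_inverse_recur}), and then cut the full determinant at the central index with the classical splitting identity, treating the block containing $y_{0}$ by one extra application of the recurrence so as to avoid dividing by $y_{0}$. Your route has the advantage of isolating the one-sided determinant formula as an explicit, reusable lemma (it is the structural fact underlying several later formulas in the paper, e.g.\ (\ref{eq:char_rel_F})) and of making the appearance of the three bracketed terms in (\ref{eq:det_Jacobid_gen}) completely transparent; the paper's induction is more compact given the $\mathfrak{F}$-machinery already developed, at the price of the less obvious absorption identity. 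The only cosmetic point worth tightening in your write-up is the base case of the splitting identity: with the convention $D_{[i,i-1]}=1$ you should either also set $D_{[i,i-2]}=0$ or start the induction at a cut of length one, where the identity reduces to the three-term recurrence itself.
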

\begin{proof}
Let us proceed by induction in $d$. The case $d=1$ is easy to verify.
Put $\mathcal{N}_{d}(w;y)=\det\!\big(\diag(y)+wE_{+}+wE_{-}\big)$.
Suppose (\ref{eq:det_Jacobid_gen}) is true for some $d\geq1$. For
given $w\in\mathbb{C}$ and $y\in\mathbb{C}{}^{2d+3}$ consider the
quantity $\mathcal{N}_{d+1}(w;y)$. Let us split the corresponding
$(2d+3)\times(2d+3)$ Jacobi matrix into four blocks by splitting
the set of indices into two disjoint sets $\{-d-1,d+1\}$ and $\{-d,-d+1,-d+2,\ldots,d\}$.
Applying the rule\[
\det\begin{pmatrix}A & B\\
C & D\end{pmatrix}=\det(A)\det(D-CA^{-1}B)\]
one derives the recurrence relation\[
\mathcal{N}_{d+1}(w;y_{-d-1},y_{-d},y_{-d+1},\ldots,y_{d},y_{d+1})=y_{d+1}y_{-d-1}\mathcal{N}_{d}(w;y'_{-d},y_{-d+1},y_{-d+2},\ldots,y_{d-1},y'_{d})\]
where\[
y'_{d}=\left(1-\frac{w^{2}}{y_{d}y_{d+1}}\right)\! y_{d},\mbox{ }y'_{-d}=\left(1-\frac{w^{2}}{y_{-d}y_{-d-1}}\right)\! y_{-d}.\]
Now it is sufficient to use the induction hypothesis jointly with
the equality\[
(1-x_{n-1}x_{n})\,\mathfrak{F}\!\left(x_{1},x_{2},\ldots,x_{n-2},\frac{x_{n-1}}{1-x_{n-1}x_{n}}\right)=\,\mathfrak{F}(x_{1},x_{2},\ldots,x_{n-1},x_{n})\]
which is valid for $n\geq2$ and which follows from relations (\ref{eq:Fx_reversed})
and (\ref{eq:Fx_eq_Fxk_Fxrem}), with $k=2$.\end{proof}
\begin{rem}
Let us consider a general finite symmetric Jacobi matrix $J$ of the
form\[
J=\begin{pmatrix}\lambda_{1} & w_{1}\\
w_{1} & \lambda_{2} & w_{2}\\
 & \ddots & \ddots & \ddots\\
 &  & \ddots & \ddots & \ddots\\
 &  &  & w_{n-2} & \lambda_{n-1} & w_{n-1}\\
 &  &  &  & w_{n-1} & \lambda_{n}\end{pmatrix}\]
such that $\prod_{k=1}^{n-1}w_{k}\neq0$. The Jacobi matrix can be
decomposed into the product\begin{equation}
J=G\tilde{J}G\label{eq:J_decompose}\end{equation}
where $G=\diag(\gamma_{1},\gamma_{2},\ldots,\gamma_{n})$ is a diagonal
matrix and $\tilde{J}$ is a Jacobi matrix with all units on the neighboring
parallels to the diagonal,\[
\tilde{J}=\begin{pmatrix}\tilde{\lambda}_{1} & 1\\
1 & \tilde{\lambda}_{2} & 1\\
 & \ddots & \ddots & \ddots\\
 &  & \ddots & \ddots & \ddots\\
 &  &  & 1 & \tilde{\lambda}_{n-1} & 1\\
 &  &  &  & 1 & \tilde{\lambda}_{n}\end{pmatrix}.\]
Hence $\det(J)=\left(\prod_{k=1}^{n}\gamma_{k}^{\,2}\right)\det(\tilde{J})$,
and one can employ formula (\ref{eq:det_Jacobid_gen}) to evaluate
$\det(\tilde{J})$ (in the case of odd dimension). In more detail,
one can put\[
\gamma_{2k-1}=\prod_{j=1}^{k-1}\frac{w_{2j}}{w_{2j-1}}\,,\mbox{ }\gamma_{2k}=w_{1}\prod_{j=1}^{k-1}\frac{w_{2j+1}}{w_{2j}}\,,\mbox{ }k=1,2,3,\ldots.\]
Alternatively, the sequence $\{\gamma_{k}\}_{k=1}^{n}$ is defined
recursively by $\gamma_{1}=1$, $\gamma_{k+1}=w_{k}/\gamma_{k}$.
Furthermore, $\tilde{\lambda}_{k}=\lambda_{k}/\gamma_{k}^{\,2}$.
With this choice, (\ref{eq:J_decompose}) is clearly true.
\end{rem}
Next we aim to derive a formula for the characteristic function of
a Jacobi matrix with an antisymmetric diagonal. Suppose $\lambda=(\lambda_{-d},\lambda_{-d+1},\lambda_{-d+2},\ldots,\lambda_{d})\in\mathbb{C}^{2d+1}$
and $\lambda_{-k}=-\lambda_{k}$ for $-d\leq k\leq d$; in particular,
$\lambda_{0}=0$. We consider the Jacobi matrix $K=\diag(\lambda)+wE_{+}+wE_{-}$.
Let us denote, temporarily, by $S$ the diagonal matrix with alternating
signs on the diagonal, $S=\diag(1,-1,1,\ldots,1)$, and by $Q$ the
permutation matrix with the entries $Q_{j,k}=\delta_{j+k,0}$ for
$-d\leq j,k\leq d$. The commutation relations\[
SQKQS=-K,\mbox{ }S^{2}=Q^{2}=I,\]
imply\[
\det(K-zI)=\det\!\big(SQ(K-zI)QS\big)=-\det(K+zI).\]
Hence the characteristic function of $K$ is an odd polynomial in
the variable $z$. This can be also seen from the explicit formula
(\ref{eq:char_antisym_diag}) derived below.
\begin{prop}
\label{thm:char_antisym_diag} Suppose $d\in\mathbb{N}$, $w\in\mathbb{C}$,
$\lambda\in\mathbb{C}^{2d+1}$ and $\lambda_{-k}=-\lambda_{k}$ for
$k=-d,-d+1,-d+2,\ldots,d$. Then\begin{eqnarray}
 &  & \hs{-1em}\frac{(-1)^{d+1}}{z}\,\det\!\big(\diag(\lambda)+wE_{+}+wE_{-}-zI\big)\label{eq:char_antisym_diag}\\
 &  & \hs{-1em}=\left(\prod_{k=1}^{d}(\lambda_{k}^{2}-z^{2})\right)\mathfrak{F}\!\left(\frac{w}{\lambda_{1}-z},\dots,\frac{w}{\lambda_{d}-z}\right)\mathfrak{F}\!\left(\frac{w}{\lambda_{1}+z},\dots,\frac{w}{\lambda_{d}+z}\right)\nonumber \\
 &  & \hs{-1em}\phantom{=}\,+2\sum_{j=1}^{d}w^{2j}\!\left(\prod_{k=j+1}^{d}(\lambda_{k}^{2}-z^{2})\right)\!\mathfrak{F}\!\left(\frac{w}{\lambda_{j+1}-z},\dots,\frac{w}{\lambda_{d}-z}\right)\mathfrak{F}\!\left(\frac{w}{\lambda_{j+1}+z},\dots,\frac{w}{\lambda_{d}+z}\right)\!.\nonumber \end{eqnarray}
\end{prop}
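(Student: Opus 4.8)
The plan is to specialize the general determinant formula (\ref{eq:det_Jacobid_gen}) to the sequence $y_k=\lambda_k-z$ and then collapse the resulting expression by means of the antisymmetrization identity (\ref{eq:uFuFv_antisym}).

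First I would put $y_k=\lambda_k-z$ for $k=-d,\dots,d$ and apply (\ref{eq:det_Jacobid_gen}); this is legitimate whenever $z\neq\pm\lambda_k$ for every $k$, so that $\prod_{k=1}^{d}y_ky_{-k}\neq0$. The antisymmetry $\lambda_{-k}=-\lambda_k$ gives $y_{-k}=-(\lambda_k+z)$, hence $y_ky_{-k}=-(\lambda_k^{2}-z^{2})$, $y_0=-z$, $w/y_k=w/(\lambda_k-z)=:u_k$ and $w/y_{-k}=-w/(\lambda_k+z)=:-v_k$. The observation that makes things work is that $\mathfrak{F}$ is invariant under a global sign change of its argument: every monomial in (\ref{eq:defF}) is a product of an even number ($2m$) of the $x_j$'s, so $\mathfrak{F}(-x_1,-x_2,\dots)=\mathfrak{F}(x_1,x_2,\dots)$; consequently $\mathfrak{F}(w/y_{-1},\dots,w/y_{-d})=\mathfrak{F}(v_1,\dots,v_d)$ and $\mathfrak{F}(w/y_{-2},\dots,w/y_{-d})=\mathfrak{F}(v_2,\dots,v_d)$. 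Pulling the factor $(-1)^{d}\prod_{k=1}^{d}(\lambda_k^{2}-z^{2})$ out of (\ref{eq:det_Jacobid_gen}), combining it with the prefactor $(-1)^{d+1}/z$, and rewriting the two cross terms via $w^{2}/\bigl(z(\lambda_1-z)\bigr)=(w/z)u_1$ and $w^{2}/\bigl(z(\lambda_1+z)\bigr)=(w/z)v_1$, a short computation reduces the left-hand side of (\ref{eq:char_antisym_diag}) to
\[
\Bigl(\prod_{k=1}^{d}(\lambda_k^{2}-z^{2})\Bigr)\!\left[\mathfrak{F}(u_1,\dots,u_d)\,\mathfrak{F}(v_1,\dots,v_d)+\frac{w}{z}\Bigl(u_1\,\mathfrak{F}(u_2,\dots,u_d)\,\mathfrak{F}(v_1,\dots,v_d)-v_1\,\mathfrak{F}(u_1,\dots,u_d)\,\mathfrak{F}(v_2,\dots,v_d)\Bigr)\right]\!.
\]

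Next I would apply the lemma, that is identity (\ref{eq:uFuFv_antisym}) with $n=d$, to the bracketed antisymmetric combination: it equals $\sum_{j=1}^{d}\bigl(\prod_{k=1}^{j-1}u_kv_k\bigr)(u_j-v_j)\,\mathfrak{F}(u_{j+1},\dots,u_d)\mathfrak{F}(v_{j+1},\dots,v_d)$. Since $u_kv_k=w^{2}/(\lambda_k^{2}-z^{2})$ and $u_j-v_j=2wz/(\lambda_j^{2}-z^{2})$, one gets $\bigl(\prod_{k=1}^{j-1}u_kv_k\bigr)(u_j-v_j)=2w^{2j-1}z\big/\prod_{k=1}^{j}(\lambda_k^{2}-z^{2})$; multiplying by $w/z$ and then by the outer product $\prod_{k=1}^{d}(\lambda_k^{2}-z^{2})$ turns the $j$-th term into $2w^{2j}\prod_{k=j+1}^{d}(\lambda_k^{2}-z^{2})\,\mathfrak{F}(u_{j+1},\dots,u_d)\mathfrak{F}(v_{j+1},\dots,v_d)$, which is exactly the $j$-th summand of (\ref{eq:char_antisym_diag}); the $\mathfrak{F}(u)\mathfrak{F}(v)$ term reproduces the first line on the right-hand side. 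This establishes (\ref{eq:char_antisym_diag}) for all $z\notin\{\pm\lambda_1,\dots,\pm\lambda_d\}$.

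Finally, to remove this restriction I would note that both sides of (\ref{eq:char_antisym_diag}) are polynomials in $z$: the left-hand side because $\det\!\bigl(\diag(\lambda)+wE_{+}+wE_{-}-zI\bigr)$ is an odd polynomial in $z$ (as remarked just before the statement) and hence divisible by $z$; the right-hand side because multiplying $\mathfrak{F}\!\bigl(w/(\lambda_1\mp z),\dots,w/(\lambda_d\mp z)\bigr)$ by $\prod_{k=1}^{d}(\lambda_k\mp z)$ clears all denominators, each monomial of $\mathfrak{F}$ involving pairwise distinct indices. Two polynomials agreeing off a finite set agree identically. I expect the only real obstacle to be bookkeeping: keeping track of the several sign changes (notably $y_ky_{-k}=-(\lambda_k^{2}-z^{2})$, the $(-1)^{d+1}$ prefactor, and the sign of the $w^{2}/y_{-1}$ term) and of the powers of $w$ in the telescoping products, together with the one genuinely substantive step of recognizing that the two cross terms of (\ref{eq:det_Jacobid_gen}) assemble precisely into the left-hand side of (\ref{eq:uFuFv_antisym}).
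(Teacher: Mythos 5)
Your proposal is correct and follows essentially the same route as the paper: specialize the general determinant formula (\ref{eq:det_Jacobid_gen}) with $y_k=\lambda_k-z$, $y_0=-z$, $y_{-k}=-(\lambda_k+z)$, and collapse the two cross terms via identity (\ref{eq:uFuFv_antisym}) with $u_k=w/(\lambda_k-z)$, $v_k=w/(\lambda_k+z)$, using $u_j-v_j=2zu_jv_j/w$. Your two added touches --- making explicit the sign-invariance $\mathfrak{F}(-x)=\mathfrak{F}(x)$ and the polynomial-continuation argument removing the restriction $z\neq\pm\lambda_k$ --- are details the paper leaves implicit, and both are sound.
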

\begin{proof}
This is a particular case of (\ref{eq:det_Jacobid_gen}) where one
has to set $y_{k}=\lambda_{k}-z$ for $k>0$, $y_{0}=-z$, $y_{k}=-(\lambda_{-k}+z)$
for $k<0$. To complete the proof it suffices to verify that\begin{eqnarray*}
 &  & \frac{w^{2}}{z\,(\lambda_{1}-z)}\,\mathfrak{F}\!\left(\frac{w}{\lambda_{2}-z},\dots,\frac{w}{\lambda_{d}-z}\right)\mathfrak{F}\!\left(\frac{w}{\lambda_{1}+z},\frac{w}{\lambda_{2}+z},\dots,\frac{w}{\lambda_{d}+z}\right)\\
 &  & -\frac{w^{2}}{z\,(\lambda_{1}+z)}\,\mathfrak{F}\!\left(\frac{w}{\lambda_{1}-z},\frac{w}{\lambda_{2}-z},\dots,\frac{w}{\lambda_{d}-z}\right)\mathfrak{F}\!\left(\frac{w}{\lambda_{2}+z},\dots,\frac{w}{\lambda_{d}+z}\right)\\
 &  & =\,2\sum_{j=1}^{d}w^{2j}\!\left(\prod_{k=1}^{j}\frac{1}{\lambda_{k}^{2}-z^{2}}\right)\!\mathfrak{F}\!\left(\frac{w}{\lambda_{j+1}-z},\dots,\frac{w}{\lambda_{d}-z}\right)\mathfrak{F}\!\left(\frac{w}{\lambda_{j+1}+z},\dots,\frac{w}{\lambda_{d}+z}\right)\!.\end{eqnarray*}
To this end, one can apply (\ref{eq:uFuFv_antisym}), with $n=d$,
$u_{k}=w/(\lambda_{k}-z)$, $v_{k}=w/(\lambda_{k}+z)$. Note that
$u_{j}-v_{j}=2zu_{j}v_{j}/w$. 
\end{proof}
Zero always belongs to spectrum of the Jacobi matrix $K$ for the
characteristic function is odd. Moreover, as is well known and as
it simply follows from the analysis of the eigenvalue equation, if
$w\neq0$ then to every eigenvalue of $K$ there belongs exactly one
linearly independent eigenvector.
\begin{prop}
Suppose $w\in\mathbb{C}$, $\lambda\in\mathbb{C}^{2d+1}$, $\lambda_{-k}=-\lambda_{k}$
for $-d\leq k\leq d$, and $\prod_{k=1}^{d}\lambda_{k}\neq0$. Then
the vector $v\in\mathbb{C}^{2d+1}$, $v^{T}=(\theta_{-d},\theta_{-d+1},\theta_{-d+2},\ldots,\theta_{d})$,
with the entries\begin{equation}
\theta_{k}=(-1)^{k}w^{k}\!\left(\prod_{j=k+1}^{d}\lambda_{j}\right)\mathfrak{F}\!\left(\frac{w}{\lambda_{k+1}},\frac{w}{\lambda_{k+2}},\ldots,\frac{w}{\lambda_{d}}\right)\mbox{ for }k=0,1,2,\ldots,d,\label{eq:0eigenv_antisym_diag}\end{equation}
$\theta_{-k}=(-1)^{k}\,\theta_{k}$ \hspace{0.3em} for $-d\leq k\leq d$,
\hspace{0.5em} belongs to the kernel of the Jacobi matrix $\diag(\lambda)+wE_{+}+wE_{-}$.
In particular,\hspace{0.6em} $\theta_{0}=\lambda_{1}\lambda_{2}\ldots\lambda_{d}\,\mathfrak{F}(w/\lambda_{1},w/\lambda_{2},\ldots,w/\lambda_{d})$,
$\theta_{d}=(-1)^{d}w^{d}$, and so $v\neq0$.\end{prop}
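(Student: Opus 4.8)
The plan is to verify directly that $v$ lies in the kernel of $K:=\diag(\lambda)+wE_{+}+wE_{-}$, i.e.\ that the linear system $Kv=0$ holds componentwise. Writing out row $j$ of $Kv=0$ gives
\[
w\theta_{j-1}+\lambda_{j}\theta_{j}+w\theta_{j+1}=0,\qquad -d\le j\le d,
\]
where in the extreme rows $j=-d$ and $j=d$ the terms $w\theta_{-d-1}$ and $w\theta_{d+1}$ are simply absent, the matrix being $(2d+1)\times(2d+1)$. First I would exploit the built-in symmetry of $v$: since $\theta_{-k}=(-1)^{k}\theta_{k}$ and $\lambda_{-k}=-\lambda_{k}$, the left-hand side of the equation in row $-k$ equals $(-1)^{k+1}$ times the left-hand side of the equation in row $k$, so it suffices to verify the rows $j=0,1,\ldots,d$. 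Row $j=0$ is automatic, because $\lambda_{0}=0$ and $\theta_{-1}=-\theta_{1}$, so its left-hand side reads $w\theta_{-1}+w\theta_{1}=0$.

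For $1\le j\le d-1$ I would substitute the explicit formula (\ref{eq:0eigenv_antisym_diag}). It is convenient to abbreviate $\theta_{k}=(-1)^{k}w^{k}P_{k}F_{k}$ for $0\le k\le d$, where $P_{k}=\prod_{j=k+1}^{d}\lambda_{j}$ and $F_{k}=\mathfrak{F}(w/\lambda_{k+1},\ldots,w/\lambda_{d})$, so that $P_{d}=1$ and $F_{d}=\mathfrak{F}(\emptyset)=1$. Using $P_{k-1}=\lambda_{k}P_{k}$ and $P_{k+1}=P_{k}/\lambda_{k+1}$ and pulling the common factor $(-1)^{k-1}w^{k}P_{k}$ out of the three terms, the equation in row $k$ collapses to
\[
\lambda_{k}F_{k-1}-\lambda_{k}F_{k}+\frac{w^{2}}{\lambda_{k+1}}F_{k+1}=0 ,
\]
that is, after dividing by $\lambda_{k}\neq0$, to
\[
F_{k-1}=F_{k}-\frac{w}{\lambda_{k}}\,\frac{w}{\lambda_{k+1}}\,F_{k+1}.
\]
But this is exactly relation (\ref{eq:F_finite_recur}) of Proposition~\ref{thm:F_T_recur} applied to the sequence $(w/\lambda_{k},w/\lambda_{k+1},\ldots,w/\lambda_{d})$, which has $d-k+1\ge2$ entries; hence it holds. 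The remaining row $j=d$ I would settle by hand: there $F_{d-1}=\mathfrak{F}(w/\lambda_{d})=1$ and $P_{d-1}=\lambda_{d}$, so $\theta_{d-1}=(-1)^{d-1}w^{d-1}\lambda_{d}$ while $\theta_{d}=(-1)^{d}w^{d}$, and indeed $w\theta_{d-1}+\lambda_{d}\theta_{d}=(-1)^{d-1}w^{d}\lambda_{d}+(-1)^{d}w^{d}\lambda_{d}=0$. This establishes $Kv=0$.

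Finally, $v\neq0$: if $w\neq0$ then $\theta_{d}=(-1)^{d}w^{d}\neq0$, while if $w=0$ then $\mathfrak{F}(0,\ldots,0)=1$ gives $\theta_{0}=\lambda_{1}\lambda_{2}\cdots\lambda_{d}\neq0$ by hypothesis. I do not expect any genuine difficulty in this argument; the only points demanding care are the bookkeeping of signs and index shifts in the reduction of row $k$, and recognizing the resulting three-term identity as a direct instance of (\ref{eq:F_finite_recur}).
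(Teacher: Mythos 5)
Your proof is correct and follows exactly the route of the paper's own (much terser) argument: reduce to rows $0\le k\le d$ by the symmetries $\lambda_{-k}=-\lambda_{k}$, $\theta_{-k}=(-1)^{k}\theta_{k}$, then recognize each three-term relation as an instance of (\ref{eq:F_finite_recur}). Your explicit case split for $v\neq0$ (using $\theta_{d}$ when $w\neq0$ and $\theta_{0}$ when $w=0$) is a welcome bit of extra care.
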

\begin{rem*}
Clearly, formulas (\ref{eq:0eigenv_antisym_diag}) can be extended
to the case $\prod_{k=1}^{d}\lambda_{k}=0$ as well provided one makes
the obvious cancellations.\end{rem*}
\begin{proof}
One has to show that\[
w\theta_{k-1}+\lambda_{k}\theta_{k}+w\theta_{k+1}=0,\mbox{ }k=-d+1,-d+2,\ldots,d-1,\]
and $\lambda_{-d}\theta_{-d}+w\theta_{-d+1}=0$, $w\theta_{d-1}+\lambda_{d}\theta_{d}=0$.
Owing to the symmetries $\lambda_{-k}=-\lambda_{k}$, $\theta_{-k}=(-1)^{k}\theta_{k}$,
it suffices to verify the equalities only for indices $0\leq k\leq d$.
This can be readily carried out using the explicit formulas (\ref{eq:0eigenv_antisym_diag})
and the rule (\ref{eq:F_finite_recur}).
\end{proof}

\section{Jacobi matrices with a linear diagonal}

Finally we focus on finite-dimensional Jacobi matrices of odd dimension
whose diagonal depends linearly on the index and whose parallels to
the diagonal are constant. Without loss of generality one can assume
that the diagonal equals\newline $(-d,-d+1,-d+2,\ldots,d)$, $d\in\mathbb{Z}_{+}$.
For $w\in\mathbb{C}$ put\[
K_{0}=\diag(-d,-d+1,-d+2,\ldots,d),\mbox{ }K(w)=K_{0}+wE_{+}+wE_{-}.\]
Concerning the characteristic function $\chi(z)=\det(K(w)-z)$, we
know that this is an odd function. Put\[
\chi_{\mathrm{red}}(z)=\frac{(-1)^{d+1}}{z}\,\det(K(w)-z).\]
Hence $\chi_{\mathrm{red}}(z)$ is an even polynomial of degree $2d$.
Further, denote by\linebreak $\{e_{-d},e_{-d+1},e_{-d+2},\ldots,e_{d}\}$
the standard basis in $\mathbb{C}^{2d+1}$.

Suppose $w\neq0$. Let us consider a family of column vectors $x_{s,n}\in\mathbb{C}^{2d+1}$
depending on the parameters $s,n\in\mathbb{Z}$ and defined by\[
x_{s,n}^{\,\, T}=\big(\mathfrak{J}(s+d,n),\mathfrak{J}(s+d-1,n),\mathfrak{J}(s+d-2,n),\ldots,\mathfrak{J}(s-d,n)\big).\]
From the fact that the matrix $\mathfrak{J}$ obeys (\ref{eq:A_order2_diff_eq}),
(\ref{eq:A_antisymm}) one derives that\[
\forall s,n\in\mathbb{Z},\mbox{ }K(w)x_{s,n}=s\, x_{s,n}-w\,\mathfrak{J}(s+d+1,n)e_{-d}-w\,\mathfrak{J}(s-d-1,n)e_{d}.\]
Put\[
v_{s}=x_{s,s+d+1},\mbox{ }s\in\mathbb{Z}.\]
Recalling that $\mathfrak{J}(m,m)=\mathfrak{J}(-m,m)=0$ one has\begin{equation}
K(w)v_{s}=s\, v_{s}-w\,\mathfrak{J}(s-d-1,s+d+1)e_{d}.\label{eq:Kw_vs}\end{equation}

\begin{rem}
Putting $s=0$ one gets $K(w)v_{0}=0$, and so $v_{0}$ spans the
kernel of $K(w)$.\end{rem}
\begin{lem}
\label{thm:basis_Vcal} For every $\ell=-d,-d+1,-d+2,\ldots,d,$ one
has\[
w^{d+\ell}\sum_{s=-d}^{\ell}\,\frac{(-1)^{\ell+s}}{(d+s)!\,(\ell-s)!}\, v_{s}\in e_{\ell}+\span\{e_{\ell+1},e_{\ell+2},\ldots,e_{d}\}.\]
In particular,\begin{equation}
e_{d}=w^{2d}\sum_{s=-d}^{d}\,\frac{(-1)^{d+s}}{(d+s)!\,(d-s)!}\, v_{s}.\label{eq:ed_basis_Vcal}\end{equation}
Consequently, $\mathcal{V}=\{v_{-d},v_{-d+1},v_{-d+2},\ldots,v_{d}\}$
is a basis in $\mathbb{C}^{2d+1}$.\end{lem}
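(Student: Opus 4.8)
The plan is to verify the asserted membership one coordinate at a time, and then read off the two consequences. Write the standard basis of $\mathbb{C}^{2d+1}$ as $\{e_{-d},e_{-d+1},\ldots,e_d\}$. By the definition of $x_{s,n}$, the $e_j$-component of $v_s=x_{s,s+d+1}$ is $\mathfrak{J}(s-j,\,s+d+1)$. Hence the claim is equivalent to showing that the scalar
\[
C_j:=\sum_{s=-d}^{\ell}\frac{(-1)^{\ell+s}}{(d+s)!\,(\ell-s)!}\,\mathfrak{J}(s-j,\,s+d+1)
\]
vanishes for every $j$ with $-d\le j<\ell$ and equals $w^{-(d+\ell)}$ for $j=\ell$.

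Next I would substitute the closed formula of Proposition~\ref{thm:Jmn} with $m=s-j$, $n=s+d+1$ (legitimate since $m\le n$ because $j\ge-d$). The crucial observation is that $n-m=d+j+1$ does not depend on $s$, so
\[
\mathfrak{J}(s-j,\,s+d+1)=\sum_{s'=0}^{[(d+j)/2]}(-1)^{s'}\binom{s+d-s'}{d+j-2s'}\frac{(d+j-s')!}{s'!}\,w^{\,2s'-d-j},
\]
and only the binomial coefficient still carries the index $s$. Interchanging the two finite sums in $C_j$ and introducing $L:=d+\ell$ together with $t:=s+d$ (so $t$ runs from $0$ to $L$, $(-1)^{\ell+s}=(-1)^{L}(-1)^{t}$, and $\bigl((d+s)!\,(\ell-s)!\bigr)^{-1}=\binom{L}{t}/L!$), one is reduced to evaluating, for each $s'$, the inner sum
\[
\frac{(-1)^{L}}{L!}\sum_{t=0}^{L}(-1)^{t}\binom{L}{t}\binom{t-s'}{d+j-2s'}.
\]

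The key step is the classical finite-difference identity: $\sum_{t=0}^{L}(-1)^{t}\binom{L}{t}P(t)=0$ for every polynomial $P$ with $\deg P<L$, while it equals $(-1)^{L}L!$ when $P$ is monic of degree exactly $L$ (proved by an easy induction on $L$, or by reading the sum as an $L$-th forward difference at $0$). Here $t\mapsto\binom{t-s'}{d+j-2s'}$ is $1/(d+j-2s')!$ times a monic polynomial of degree $d+j-2s'$, and $d+j-2s'\le d+\ell=L$ with equality precisely when $s'=0$ and $j=\ell$. Consequently every inner sum is zero when $j<\ell$, giving $C_j=0$; and for $j=\ell$ only the term $s'=0$ survives, contributing $\dfrac{(d+\ell)!}{0!}\,w^{-d-\ell}\cdot\dfrac{1}{(d+\ell)!}=w^{-(d+\ell)}$, so $w^{d+\ell}C_\ell=1$. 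Putting $\ell=d$ gives (\ref{eq:ed_basis_Vcal}), since then the spanning set on the right is empty. For the final assertion, set $u_\ell:=w^{d+\ell}\sum_{s=-d}^{\ell}\frac{(-1)^{\ell+s}}{(d+s)!\,(\ell-s)!}\,v_s$; by what has been shown, $u_{-d},\ldots,u_d$ form a unitriangular system with respect to $e_{-d},\ldots,e_d$ and hence a basis, while the matrix expressing the $u_\ell$ through the $v_s$ is triangular with nonzero diagonal entries $w^{d+\ell}/(d+\ell)!$ (this is where $w\neq0$ enters), so $\mathcal{V}$ is a basis of $\mathbb{C}^{2d+1}$ as well.

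The only point requiring care is the bookkeeping: correctly locating the $e_j$-component of $v_s$ inside $x_{s,s+d+1}$, and tracking the summation ranges after inserting Proposition~\ref{thm:Jmn} — in particular checking that $d+j-2s'$ stays in $[0,L]$, which is exactly what lets the finite-difference identity be applied verbatim. Everything past that is mechanical.
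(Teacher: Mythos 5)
Your proof is correct and follows essentially the same route as the paper: both insert the closed formula of Proposition~\ref{thm:Jmn} into the coefficient sum and then kill all terms of too low degree with the alternating binomial (finite-difference) identity $\sum_{t=0}^{L}(-1)^{t}\binom{L}{t}P(t)=0$ for $\deg P<L$, the top-degree term contributing $w^{-d-\ell}$. You merely spell out the bookkeeping that the paper compresses into ``one can readily show.''
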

\begin{proof}
One has to show that\[
w^{d+\ell}\sum_{s=-d}^{\ell}\,\frac{(-1)^{\ell+s}}{(d+s)!\,(\ell-s)!}\,\mathfrak{J}(s-k,s+d+1)=\delta_{\ell,k}\mbox{ }\mbox{ for }-d\leq k\leq\ell.\]
Note that for any $a\in\mathbb{C}$ and $n\in\mathbb{Z}_{+}$,\[
\sum_{k=0}^{n}(-1)^{k}\binom{n}{k}\binom{a+k}{r}=0,\mbox{ }r=0,1,2,\ldots,n-1,\mbox{ }\sum_{k=0}^{n}(-1)^{k}\binom{n}{k}\binom{a+k}{n}=(-1)^{n}.\]
Using these equalities and (\ref{eq:Jmn}) one can readily show, more
generally, that\[
\sum_{s=-d}^{\ell}\,\frac{(-1)^{\ell+s}}{(d+s)!\,(\ell-s)!}\,\mathfrak{J}(m+s,n+s)=0\mbox{ }\mbox{ for }m,n\in\mathbb{Z},m\leq n\leq m+d+\ell,\]
and\[
\sum_{s=-d}^{\ell}\,\frac{(-1)^{\ell+s}}{(d+s)!\,(\ell-s)!}\,\mathfrak{J}(m+s,m+d+\ell+s+1)=w^{-d-\ell}.\]
This proves the lemma.
\end{proof}
Denote by $\tilde{K}(w)$ the matrix of $K(w)$ in the basis $\mathcal{V}$
introduced in Lemma~\ref{thm:basis_Vcal}. Let $a,b\in\mathbb{C}^{2d+1}$
be the column vectors defined by $a^{T}=(\alpha_{-d},\alpha_{-d+1},\alpha_{-d+2},\ldots\alpha_{d})$,
$b^{T}=(\beta_{-d},\beta_{-d+1},\beta_{-d+2},\ldots\beta_{d})$,\begin{equation}
\alpha_{s}=\mathfrak{J}(s-d-1,s+d+1),\mbox{ }\beta_{s}=\frac{(-1)^{d+s}\, w^{2d+1}}{(d+s)!\,(d-s)!}\,,\mbox{ }s=-d,-d+1,-d+2,\ldots,d.\label{eq:alpha_beta_def}\end{equation}
Note that\begin{equation}
\alpha_{-s}=-\alpha_{s},\mbox{ }\beta_{-s}=\beta_{s}.\label{eq:alpha_beta_sym}\end{equation}
The former equality follows from (\ref{eq:A_odd_each_var}) and (\ref{eq:A_antisymm}).
From (\ref{eq:Kw_vs}) and (\ref{eq:ed_basis_Vcal}) one deduces that\begin{equation}
\tilde{K}(w)=K_{0}-ba^{T}.\label{eq:Ktilde_K0_rank1}\end{equation}
Note, however, that the components of the vectors $a$ and $b$ depend
on $w$, too, though not indicated in the notation.

According to (\ref{eq:Ktilde_K0_rank1}), $\tilde{K}(w)$ differs
from the diagonal matrix $K_{0}$ by a rank-one correction. This form
is suitable for various computations. Particularly, one can express
the resolvent of $\tilde{K}(w)$ explicitly,\[
(\tilde{K}(w)-z)^{-1}=(K_{0}-z)^{-1}+\frac{1}{1-a^{T}(K_{0}-z)^{-1}b}\,(K_{0}-z)^{-1}ba^{T}(K_{0}-z)^{-1}.\]
The equality holds for any $z\in\mathbb{C}$ such that $z\notin\spec\{K_{0}\}=\{-d,-d+1,-d+2,\ldots,d\}$
and $1-a^{T}(K_{0}-z)^{-1}b\neq0$. Clearly, this set of excluded
values of $z$ is finite.

Let us proceed to derivation of a formula for the characteristic function
of $K(w)$. Proposition~\ref{thm:char_antisym_diag} is applicable
to $K(w)$ and so\begin{eqnarray}
 &  & \hs{-1.7em}\chi_{\mathrm{red}}(z)\,=\left(\prod_{k=1}^{d}(k^{2}-z^{2})\right)\mathfrak{F}\!\left(\frac{w}{1-z},\dots,\frac{w}{d-z}\right)\mathfrak{F}\!\left(\frac{w}{1+z},\dots,\frac{w}{d+z}\right)\label{eq:char_linear_diag}\\
 &  & +\,2\sum_{j=1}^{d}w^{2j}\!\left(\prod_{k=j+1}^{d}(k^{2}-z^{2})\right)\!\mathfrak{F}\!\left(\frac{w}{j+1-z},\dots,\frac{w}{d-z}\right)\mathfrak{F}\!\left(\frac{w}{j+1+z},\dots,\frac{w}{d+z}\right)\!.\nonumber \end{eqnarray}
Below we derive a more convenient formula for $\chi_{\mathrm{red}}(z)$.
\begin{lem}
One has\begin{equation}
\chi_{\mathrm{red}}(0)=\sum_{s=0}^{d}\frac{\big((d-s)!\big)^{2}\,(2d-s+1)!}{s!\,(2d-2s+1)!}\, w^{2s}\label{eq:char_red_0}\end{equation}
and\begin{equation}
\chi_{\mathrm{red}}(n)=\frac{1}{n}\,\sum_{k=0}^{n-1}\,(-1)^{k}(2k+1)!\binom{n+k}{2k+1}\binom{d+k+1}{2k+1}w^{2d-2k}\label{eq:char_red_n}\end{equation}
for $n=1,2,\ldots,d$.\end{lem}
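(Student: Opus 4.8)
The plan is to build everything on the rank-one structure~(\ref{eq:Ktilde_K0_rank1}). Since $\tilde K(w)$ is similar to $K(w)$, the characteristic polynomial is $\chi(z)=\det\!\big(K(w)-z\big)=\det\!\big(K_{0}-z-ba^{T}\big)$, and the adjugate form of the matrix-determinant lemma, $\det(D-ba^{T})=\det D-a^{T}(\mathrm{adj}\,D)\,b$, applied to the diagonal matrix $D=K_{0}-z$, yields the polynomial identity
\[
\chi(z)=\prod_{k=-d}^{d}(k-z)-\sum_{k=-d}^{d}\alpha_{k}\beta_{k}\prod_{\substack{j=-d\\ j\neq k}}^{d}(j-z),
\]
valid for every $z\in\mathbb{C}$ (no invertibility of $D$ is needed). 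I would record this once and then simply specialize $z$.

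Taking $z=n$ with $1\le n\le d$: the first product contains the factor $k=n$ and vanishes, and in the sum every term with $k\neq n$ contains the factor $j=n$ and vanishes, while $\prod_{j\neq n}(j-n)=(-1)^{d+n}(d+n)!\,(d-n)!$. With the value of $\beta_{n}$ from~(\ref{eq:alpha_beta_def}) this collapses to $\chi(n)=-\alpha_{n}w^{2d+1}$, so
\[
\chi_{\mathrm{red}}(n)=\frac{(-1)^{d+1}}{n}\,\chi(n)=\frac{(-1)^{d}}{n}\,w^{2d+1}\alpha_{n}.
\]
It then remains to insert $\alpha_{n}=\mathfrak{J}(n-d-1,n+d+1)$: writing $n-d-1=-(d+1-n)$ and using the reflection rule of Lemma~\ref{thm:A_odd_each_var} gives $\alpha_{n}=(-1)^{d+1-n}\mathfrak{J}(d+1-n,\,n+d+1)$, and Proposition~\ref{thm:Jmn} (with lower index $d+1-n$ and upper index $n+d+1$, their difference being $2n$) produces a finite sum. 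The index change $k=n-1-s$ together with the elementary identity $(n+k)!/(n-k-1)!=(2k+1)!\binom{n+k}{2k+1}$ turns it precisely into~(\ref{eq:char_red_n}).

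For $z=0$ I would first note $\alpha_{0}=\mathfrak{J}(-d-1,d+1)=(-1)^{d+1}\mathfrak{J}(d+1,d+1)=0$, so the $k=0$ summand drops out and every remaining one is divisible by $(0-z)$. Dividing by $z$ and using $\prod_{k\neq0}k=(-1)^{d}(d!)^{2}$, $\prod_{j\neq0,k}j=(-1)^{d}(d!)^{2}/k$, and the symmetries~(\ref{eq:alpha_beta_sym}) to pair $k$ with $-k$, I obtain
\[
\chi_{\mathrm{red}}(0)=(d!)^{2}\Big(1-2\sum_{k=1}^{d}\frac{\alpha_{k}\beta_{k}}{k}\Big)=(d!)^{2}-2(d!)^{2}\sum_{k=1}^{d}\frac{(-1)^{k}}{(d+k)!\,(d-k)!}\,\chi_{\mathrm{red}}(k),
\]
the second step using $\alpha_{k}\beta_{k}/k=(-1)^{k}\chi_{\mathrm{red}}(k)/\big((d+k)!\,(d-k)!\big)$, a consequence of the $z=n$ computation and the value of $\beta_{k}$. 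Substituting the already proved~(\ref{eq:char_red_n}) and comparing coefficients of $w^{2d-2\ell}$, after the factorial cancellations and after recognizing $\frac{(k+\ell)!}{k\,(k-\ell-1)!}=\prod_{i=1}^{\ell}(k^{2}-i^{2})$ (which lets the summation range be pushed down to $k=1$, the new terms vanishing), the claim~(\ref{eq:char_red_0}) reduces, for $0\le\ell\le d-1$, to
\[
\sum_{k=-d}^{d}(-1)^{k}\binom{2d}{d+k}\prod_{i=1}^{\ell}(k^{2}-i^{2})=0 .
\]
Via $j=d+k$ this equals $(-1)^{d}\sum_{j=0}^{2d}(-1)^{j}\binom{2d}{j}Q(j)$ with $\deg Q=2\ell\le 2d-2<2d$, and $\sum_{j=0}^{N}(-1)^{j}\binom{N}{j}Q(j)=0$ for every polynomial $Q$ of degree below $N$; the one leftover coefficient, that of $w^{0}$, equals $(d!)^{2}$ on both sides (the $s=0$ term in~(\ref{eq:char_red_0})).

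The one genuinely delicate point is this last coefficient comparison for $z=0$: one has to carry the double summation stemming from $\sum_{k}(-1)^{k}\chi_{\mathrm{red}}(k)/((d+k)!(d-k)!)$, isolate the coefficient of a fixed power of $w$, run the factorial simplifications, and spot the product $\prod_{i=1}^{\ell}(k^{2}-i^{2})$ so that the clean alternating binomial sum emerges. Everything else — the rank-one reduction, the $z=n$ evaluation, and the reindexing yielding~(\ref{eq:char_red_n}) — is short, which is why I would establish the $z=n$ formula first and feed it into the $z=0$ case.
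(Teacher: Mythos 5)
Your proposal is correct. For the values $\chi_{\mathrm{red}}(n)$, $1\le n\le d$, you follow essentially the same route as the paper: the rank-one form (\ref{eq:Ktilde_K0_rank1}) collapses the determinant at $z=n$ to a single term, giving $\chi_{\mathrm{red}}(n)=\frac{(-1)^{d}}{n}w^{2d+1}\mathfrak{J}(n-d-1,n+d+1)$, and Proposition~\ref{thm:Jmn} finishes the job (the paper writes $\det(I-(K_{0}-z)^{-1}ba^{T})=1-a^{T}(K_{0}-z)^{-1}b$ rather than the adjugate form, but that is only a cosmetic difference; your reindexing $k=n-1-s$ and the identity $(n+k)!/(n-k-1)!=(2k+1)!\binom{n+k}{2k+1}$ check out). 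Where you genuinely diverge is at $z=0$: the paper evaluates $\chi_{\mathrm{red}}(0)$ from (\ref{eq:char_linear_diag}) via Proposition~\ref{thm:BesselJY_rel_F}, the identity $J_{0}(z)^{2}+2\sum_{j\ge1}J_{j}(z)^{2}=1$, and the explicit polynomial part of $\pi w^{d+1}Y_{d+1}(2w)$, whereas you bootstrap from the already established values $\chi_{\mathrm{red}}(k)$ through $\alpha_{k}\beta_{k}/k=(-1)^{k}\chi_{\mathrm{red}}(k)/((d+k)!(d-k)!)$ and reduce the coefficient comparison to the finite-difference annihilation $\sum_{j=0}^{2d}(-1)^{j}\binom{2d}{j}Q(j)=0$ for $\deg Q=2\ell<2d$, with the surviving $k=0$ term producing exactly $(\ell!)^{2}\binom{d+\ell+1}{2\ell+1}$; I verified this matches the coefficient of $w^{2(d-\ell)}$ in (\ref{eq:char_red_0}). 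Your argument is more elementary and self-contained (no Bessel-function input at all), at the price of a heavier double-sum bookkeeping; the paper's Bessel route is shorter once Proposition~\ref{thm:BesselJY_rel_F} is in hand and ties $\chi_{\mathrm{red}}(0)$ to $\pi^{2}w^{2d+2}Y_{d+1}(2w)^{2}$, which has some independent interest. One small point worth a sentence in a write-up: the rank-one decomposition is derived only for $w\neq0$, so you should note that both sides of (\ref{eq:char_red_0})--(\ref{eq:char_red_n}) are polynomials in $w$ and the case $w=0$ follows by continuity.
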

\begin{proof}
Let us first verify the formula for $\chi_{\mathrm{red}}(0)$. From
(\ref{eq:char_linear_diag}) it follows that\[
\chi_{\mathrm{red}}(0)=(d!)^{2}\,\mathfrak{F}\!\left(w,\frac{w}{2},\dots,\frac{w}{d}\right)^{\!2}+2\sum_{j=1}^{d}w^{2j}\!\left(\frac{d!}{j!}\right)^{\!2}\mathfrak{F}\!\left(\frac{w}{j+1},\frac{w}{j+2},\dots,\frac{w}{d}\right)^{\!2}.\]
By Proposition~\ref{thm:BesselJY_rel_F},\[
\chi_{\mathrm{red}}(0)=\pi^{2}w^{2d+2}\, Y_{d+1}(2w)^{2}\!\left(J_{0}(2w)^{2}+2\sum_{j=1}^{d}J_{j}(2w)^{2}\right)\!+O\!\left(w^{2d+2}\log(w)\right).\]
Further we need some basic facts concerning Bessel functions; see,
for instance, \cite[Chp.~9]{AbramowitzStegun}. Recall that\[
J_{0}(z)^{2}+2\sum_{j=1}^{\infty}J_{j}(z)^{2}=1.\]
Hence\begin{eqnarray*}
\chi_{\mathrm{red}}(0) & = & \pi^{2}w^{2d+2}\, Y_{d+1}(2w)^{2}+O\!\left(w^{2d+2}\log(w)\right)\\
 & = & \!\left(\sum_{k=0}^{d}\,\frac{(d-k)!}{k!}\, w^{2k}\right)^{\!2}+O\!\left(w^{2d+2}\log(w)\right).\end{eqnarray*}
Note that $\chi_{\mathrm{red}}(0)$ is a polynomial in the variable
$w$ of degree $2d$, and so\[
\chi_{\mathrm{red}}(0)=\sum_{s=0}^{d}\sum_{k=0}^{s}\,\frac{(d-k)!\,(d-s+k)!}{k!\,(s-k)!}\, w^{2s}.\]
Using the identity\begin{eqnarray*}
\sum_{k=0}^{s}\,\frac{(d-k)!\,(d-s+k)!}{k!\,(s-k)!} & = & \big((d-s)!\big)^{2}\,\sum_{k=0}^{s}\binom{d-k}{d-s}\binom{d-s+k}{d-s}\\
 & = & \frac{\big((d-s)!\big)^{2}\,(2d-s+1)!}{s!\,(2d-2s+1)!}\end{eqnarray*}
one arrives at (\ref{eq:char_red_0}).

To show (\ref{eq:char_red_n}) one can make use of (\ref{eq:Ktilde_K0_rank1}).
One has\[
\chi_{\mathrm{red}}(z)=\frac{(-1)^{d+1}}{z}\,\det(\tilde{K}(w)-z)=\frac{(-1)^{d+1}}{z}\,\det(K_{0}-z)\det\!\left(I-(K_{0}-z)^{-1}ba^{T}\right).\]
Note that $\det(I+ba^{T})=1+a^{T}b$. Hence, in view of (\ref{eq:alpha_beta_sym}),\[
\chi_{\mathrm{red}}(z)=\prod_{k=1}^{d}(k^{2}-z^{2})\left(1-\sum_{s=-d}^{d}\,\frac{\beta_{s}\alpha_{s}}{s-z}\right)=\prod_{k=1}^{d}(k^{2}-z^{2})\left(1-2\,\sum_{s=1}^{d}\,\frac{s\beta_{s}\alpha_{s}}{s^{2}-z^{2}}\right).\]
Using (\ref{eq:alpha_beta_def}) one gets\[
\chi_{\mathrm{red}}(n)=-2n\beta_{n}\alpha_{n}\prod_{\begin{array}{c}
{k=1\atop k\neq n}\end{array}}^{d}(k^{2}-n^{2})=\frac{(-1)^{d}}{n}\, w^{2d+1}\,\mathfrak{J}(n-d-1,n+d+1).\]
Formula (\ref{eq:char_red_n}) then follows from (\ref{eq:Jmn}).\end{proof}
\begin{prop}
For every $d\in\mathbb{Z}_{+}$ one has\begin{equation}
\chi_{\mathrm{red}}(z)=\sum_{s=0}^{d}\binom{2d-s+1}{s}w^{2s}\,\prod_{k=1}^{d-s}(k^{2}-z^{2}).\label{eq:char_final}\end{equation}
\end{prop}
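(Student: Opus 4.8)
The plan is to argue by polynomial interpolation in the variable $z^{2}$. For fixed $w$, both sides of (\ref{eq:char_final}) are even polynomials in $z$ of degree at most $2d$, hence polynomials of degree at most $d$ in $\zeta=z^{2}$: on the right-hand side the $s$th summand has degree $d-s\le d$ in $z^{2}$, while on the left-hand side this is the observation, recorded above, that $\chi_{\mathrm{red}}$ is an even polynomial of degree $2d$. Two polynomials of degree at most $d$ in $\zeta$ which agree at the $d+1$ distinct points $\zeta=0,1,4,\ldots,d^{2}$ necessarily coincide. Hence it suffices to check that the right-hand side of (\ref{eq:char_final}), evaluated at $z=n$, equals $\chi_{\mathrm{red}}(n)$ for each $n=0,1,\ldots,d$, and for this I would invoke the preceding lemma, which provides closed expressions for $\chi_{\mathrm{red}}(0)$ and $\chi_{\mathrm{red}}(n)$.

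For $z=0$ the right-hand side of (\ref{eq:char_final}) reduces to $\sum_{s=0}^{d}\binom{2d-s+1}{s}\big((d-s)!\big)^{2}w^{2s}$, and since $\binom{2d-s+1}{s}=(2d-s+1)!/\big(s!\,(2d-2s+1)!\big)$ this is precisely (\ref{eq:char_red_0}). For $z=n$ with $1\le n\le d$ note that in the $s$th summand the product $\prod_{k=1}^{d-s}(k^{2}-n^{2})$ contains the vanishing factor $k=n$ as soon as $n\le d-s$, i.e.\ whenever $s\le d-n$; thus only the summands with $s=d-n+1,\ldots,d$ contribute. Writing $s=d-k$ with $k=0,1,\ldots,n-1$ and using $\binom{d+k+1}{d-k}=\binom{d+k+1}{2k+1}$, the surviving part of the right-hand side becomes $\sum_{k=0}^{n-1}\binom{d+k+1}{2k+1}\big(\prod_{j=1}^{k}(j^{2}-n^{2})\big)w^{2d-2k}$. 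Comparing this, coefficient by coefficient, with the formula for $\chi_{\mathrm{red}}(n)$ in (\ref{eq:char_red_n}) and cancelling the common factor $\binom{d+k+1}{2k+1}$, one is reduced to the elementary identity
\[
\prod_{j=1}^{k}(j^{2}-n^{2})=\frac{(-1)^{k}(2k+1)!}{n}\binom{n+k}{2k+1},\qquad 0\le k\le n-1,
\]
which follows immediately from $j^{2}-n^{2}=-(n-j)(n+j)$: the left-hand side equals $(-1)^{k}\frac{(n-1)!}{(n-1-k)!}\cdot\frac{(n+k)!}{n!}=(-1)^{k}\frac{(n+k)!}{n\,(n-1-k)!}$, which is exactly what the right-hand side expands to once the binomial coefficient is written out.

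The only point requiring a little care is the bookkeeping: identifying precisely which summands of (\ref{eq:char_final}) survive at $z=n$ and pairing them, power of $w$ by power of $w$, with the terms in (\ref{eq:char_red_n}). Once the index ranges are matched up, the whole verification collapses to the two displayed binomial identities, both of which are immediate; in particular no input beyond the preceding lemma is needed.
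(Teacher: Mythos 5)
Your proof is correct and follows exactly the paper's own argument: both sides are even polynomials of degree $2d$ in $z$, so it suffices to match their values at $z=0,1,\ldots,d$ against the expressions $\chi_{\mathrm{red}}(0)$ and $\chi_{\mathrm{red}}(n)$ from the preceding lemma. You have merely written out the "straightforward computation" that the paper leaves to the reader, and your bookkeeping (the substitution $s=d-k$ and the two binomial identities) checks out.
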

\begin{proof}
Since $\chi_{\mathrm{red}}(z)$ is an even polynomial in $z$ of degree
$2d$ it is enough to check that the RHS of (\ref{eq:char_final})
coincides, for $z=0,1,2,\dots,d$, with $\chi_{\mathrm{red}}(0)$,
$\chi_{\mathrm{red}}(1)$, $\chi_{\mathrm{red}}(2)$, $\dots$, $\chi_{\mathrm{red}}(d)$.
With the knowledge of values (\ref{eq:char_red_0}) and (\ref{eq:char_red_n}),
this is a matter of straightforward computation.\end{proof}
\begin{rem}
Using (\ref{eq:char_final}) it is not difficult to check that formula
(\ref{eq:char_red_n}) is valid for any $n\in\mathbb{N}$, including
$n>d$ (the summation index $k$ runs from 1 to $\min\{n-1,d\}$).
\end{rem}
\begin{rem}
If $w\in\mathbb{R}$, $w\neq0$, then the spectrum of the Jacobi matrix
$K(w)$ is real and simple, and formula (\ref{eq:char_final}) implies
that the interval $[-1,1]$ contains no other eigenvalue except of
$0$.
\end{rem}
Eigenvectors of $K(w)$ can be expressed in terms of the function
$\mathfrak{F}$, too. Suppose $w\neq0$. Let us introduce the vector-valued
function $x(z)\in\mathbb{C}^{2d+1}$ depending on $z\in\mathbb{C}$,
$x(z)^{T}=(\xi_{-d}(z),\xi_{-d+1}(z),\xi_{-d+2}(z),\ldots,\xi_{d}(z))$,\[
\xi_{k}(z)=w^{-d-k}\,\frac{\Gamma(z+d+1)}{\Gamma(z-k+1)}\,\mathfrak{F}\!\left(\frac{w}{z-k+1},\frac{w}{z-k+2},\ldots,\frac{w}{z+d}\right)\!,\mbox{ }-d\leq k\leq d.\]
With the aid of (\ref{eq:F_finite_recur}) one derives the equality\begin{equation}
\big(K(w)-z\big)x(z)=-w^{-2d}\,\frac{\Gamma(z+d+1)}{\Gamma(z-d)}\,\mathfrak{F}\!\left(\frac{w}{z-d},\frac{w}{z-d+1},\ldots,\frac{w}{z+d}\right)e_{d}.\label{eq:Kw_xz}\end{equation}

\begin{rem}
According to (\ref{eq:Fw_over_nu_trunc}),\[
\xi_{k}(z)=w^{-d-k}\,\sum_{s=0}^{[(d+k)/2]}\,(-1)^{s}\,\frac{(d+k-s)!}{s!\,(d+k-2s)!}\, w^{2s}\,\prod_{j=s}^{d+k-s-1}(z+d-j).\]
Hence $\xi_{k}(z)$ is a polynomial in $z$ of degree $d+k$. In particular,
$\xi_{-d}(z)=1$, and so $x(z)\neq0$.\end{rem}
\begin{prop}
One has\begin{equation}
\chi(z)=-z\left(\prod_{k=1}^{d}(z^{2}-k^{2})\right)\mathfrak{F}\!\left(\frac{w}{z-d},\frac{w}{z-d+1},\frac{w}{z-d+2},\ldots,\frac{w}{z+d}\right).\label{eq:char_rel_F}\end{equation}
If $w\in\mathbb{C}$, $w\neq0$, then for every eigenvalue $\lambda\in\spec(K(w))$,
$x(\lambda)$ is an eigenvector corresponding to $\lambda$.\end{prop}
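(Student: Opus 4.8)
The plan is to establish the two assertions separately. For the formula \eqref{eq:char_rel_F} for $\chi(z)$, the key observation is that $\chi(z)=\det(K(w)-z)$ is, up to sign, the value appearing on the left-hand side of \eqref{eq:Kw_xz}: more precisely, since $x(z)^T=(\xi_{-d}(z),\dots,\xi_d(z))$ with $\xi_{-d}(z)=1$, Cramer's rule applied to the linear system $(K(w)-z)x(z)=c(z)\,e_d$, where $c(z)$ denotes the scalar on the right-hand side of \eqref{eq:Kw_xz}, expresses $\xi_{-d}(z)$ as a ratio of two determinants. The numerator is (up to sign) $c(z)$ times the minor of $K(w)-z$ obtained by deleting the row indexed by $-d$ and the column indexed by $d$; that minor is upper-triangular (the submatrix of $K(w)-z$ on rows $-d+1,\dots,d$ and columns $-d,\dots,d-1$ has only the subdiagonal entries $w$ surviving below the diagonal, hence the deleted-row/column minor is $\pm w^{2d}$). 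The denominator is $\det(K(w)-z)=\chi(z)$. Solving $\xi_{-d}(z)=1$ for $\chi(z)$ then yields $\chi(z)=\pm w^{2d}c(z)/1$; inserting $c(z)=-w^{-2d}\,\Gamma(z+d+1)/\Gamma(z-d)\,\mathfrak{F}(w/(z-d),\dots,w/(z+d))$ and simplifying $\Gamma(z+d+1)/\Gamma(z-d)=\prod_{k=-d}^{d}(z+k)=z\prod_{k=1}^d(z^2-k^2)$ gives exactly \eqref{eq:char_rel_F} after checking the sign (e.g. by comparing leading coefficients in $z$, both sides being monic of degree $2d+1$ up to the sign $(-1)^{?}$, or by evaluating at a convenient point).

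Alternatively, and perhaps more cleanly, one can bypass Cramer's rule: expand $\det(K(w)-z)$ along a single row or column and match it against the recursion \eqref{eq:F_finite_recur} that $\mathfrak{F}$ satisfies, or simply invoke Proposition~\ref{thm:char_antisym_diag} / formula \eqref{eq:char_final} and verify that the right-hand side of \eqref{eq:char_rel_F} is an odd polynomial in $z$ of degree $2d+1$ agreeing with $\chi(z)$ at $z=0,1,\dots,d$ (and using oddness to cover $z=-1,\dots,-d$), which pins down all $2d+1$ roots counted with the correct normalization. For this route one would use Proposition~\ref{thm:Fw_over_n_trunc} to see that $\mathfrak{F}(w/(z-d),\dots,w/(z+d))$ is a rational function of $z$ whose poles at $z=-d,\dots,d$ are cancelled by the factor $z\prod_{k=1}^d(z^2-k^2)$, leaving a polynomial of the right degree.

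For the second assertion, let $\lambda\in\spec(K(w))$. By \eqref{eq:char_rel_F} the condition $\chi(\lambda)=0$ means either $\lambda\in\{0,\pm1,\dots,\pm d\}$ with the vanishing coming from the prefactor, or $\mathfrak{F}(w/(\lambda-d),\dots,w/(\lambda+d))=0$. In the generic case where $\lambda\notin\{-d,\dots,d\}$, all the arguments $w/(\lambda-k)$ are finite, $x(\lambda)$ is well defined, and \eqref{eq:Kw_xz} reads $(K(w)-\lambda)x(\lambda)=-w^{-2d}\,\Gamma(\lambda+d+1)/\Gamma(\lambda-d)\,\mathfrak{F}(w/(\lambda-d),\dots,w/(\lambda+d))\,e_d$; the scalar on the right is a nonzero multiple of $\chi(\lambda)$ by the displayed simplification of the Gamma ratio, hence vanishes, so $(K(w)-\lambda)x(\lambda)=0$, and $x(\lambda)\neq0$ because $\xi_{-d}(\lambda)=1$. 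When $\lambda=n\in\{0,1,\dots,d\}$ (and symmetrically for negative integer eigenvalues), one must argue that the apparent poles of $\xi_k(z)$ at $z=n$ are spurious: by the Remark preceding the Proposition, $\xi_k(z)$ is in fact a \emph{polynomial} in $z$ of degree $d+k$, so $x(n)$ is a genuine vector with $\xi_{-d}(n)=1\neq0$; and \eqref{eq:Kw_xz} continues to hold at $z=n$ by polynomial continuity, its right-hand side being $-z\prod(z^2-k^2)\,\mathfrak{F}(\dots)\big|_{z=n}\cdot(\text{const})\,e_d$, which is $0$ since $\chi(n)=0$. Thus $x(n)$ is an eigenvector for the eigenvalue $n$.

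The main obstacle I anticipate is the bookkeeping at the integer eigenvalues: one needs the fact (from the Remark) that each $\xi_k(z)$ is polynomial, equivalently that the ratio $\Gamma(z+d+1)/\Gamma(z-k+1)\cdot\mathfrak{F}(w/(z-k+1),\dots,w/(z+d))$ has no poles, which follows from Proposition~\ref{thm:Fw_over_n_trunc} but should be stated carefully; and one must confirm that \eqref{eq:Kw_xz}, derived via \eqref{eq:F_finite_recur} under the tacit assumption that denominators are nonzero, persists at the exceptional points by a density/continuity argument in $z$. The determinant/sign computation in the first part is routine by comparison of a single special value or of leading coefficients, and I would not expect difficulty there.
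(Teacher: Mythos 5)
Your main (Cramer's-rule) argument is correct, and it is genuinely different from the paper's proof of (\ref{eq:char_rel_F}). The paper argues by root counting: with $P(z)$ denoting the right-hand side of (\ref{eq:char_rel_F}), equation (\ref{eq:Kw_xz}) shows every root of $P$ lies in $\spec(K(w))$; after reducing to real $w$ by polynomiality in $w$, multiple roots are excluded because $P(\lambda)=P'(\lambda)=0$ would make $x(\lambda)$ and $x'(\lambda)$ two linearly independent vectors (they differ in the $-d$ component since $\xi_{-d}\equiv 1$) annihilated by $(K(w)-\lambda)^{2}$, contradicting the simplicity of the spectrum of a real Jacobi matrix with nonzero off-diagonal entries; since $P$ has degree $2d+1$ and leading coefficient $-1$, its roots exhaust $\spec(K(w))$ and $P=\chi$. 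You instead read (\ref{eq:Kw_xz}) as a linear system and apply Cramer's rule to the component $\xi_{-d}(z)=1$: the replaced-column determinant expands to $c(z)$ times a cofactor whose sign is $(-1)^{(2d+1)+1}=+1$ and whose minor is triangular with all diagonal entries equal to $w$, so $\chi(z)=w^{2d}c(z)$ for $z\notin\spec(K(w))$, hence identically, and $\Gamma(z+d+1)/\Gamma(z-d)=z\prod_{k=1}^{d}(z^{2}-k^{2})$ then gives (\ref{eq:char_rel_F}) with the correct sign automatically. This is arguably cleaner: it avoids the restriction to real $w$, the appeal to spectral simplicity, and the no-multiple-roots lemma, using only (\ref{eq:Kw_xz}) and the polynomiality of the $\xi_{k}$ (already recorded in the Remark preceding the proposition, which also disposes of your worry at integer points: (\ref{eq:Kw_xz}) is an identity between polynomial vector-valued functions, so no separate limiting argument is needed). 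Two caveats: your sketched interpolation alternative is underdetermined as stated, since an odd polynomial of degree $2d+1$ has $d+1$ free coefficients while $z=0,1,\ldots,d$ yield only $d$ nontrivial conditions --- one must interpolate the even polynomial $\chi_{\mathrm{red}}$ of degree $2d$ (as the paper does for (\ref{eq:char_final})) or also match leading coefficients; and the minor you describe is triangular the other way around, which of course does not affect its determinant. The eigenvector assertion is then immediate from (\ref{eq:Kw_xz}), exactly as in the paper.
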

\begin{proof}
Denote by $P(z)$ the RHS of (\ref{eq:char_rel_F}). By (\ref{eq:Kw_xz}),
if $P(\lambda)=0$ then $x(\lambda)$ is an eigenvector of $K(w)$.
Thus it suffices to verify (\ref{eq:char_rel_F}). The both sides
depend on $w$ polynomially and so it is enough to prove the equality
for $w\in\mathbb{R}\setminus\{0\}$. Note that $P(z)$ is a polynomial
in $z$ of degree $2d+1$, and the coefficient standing at $z^{2d+1}$
equals $-1$. The set of roots of $P(z)$ is contained in $\spec(K(w))$.
One can show that $P(z)$ has no multiple roots. In fact, suppose
$P(\lambda)=P'(\lambda)=0$ for some $\lambda\in\mathbb{R}$. From
(\ref{eq:Kw_xz}) one deduces that $(K(w)-\lambda)x(\lambda)=0$,
$(K(w)-\lambda)x'(\lambda)=x(\lambda)$ (here $x'(z)$ is the derivative
of $x(z)$). Hence\[
(K(w)-\lambda)^{2}x(\lambda)=(K(w)-\lambda)^{2}x'(\lambda)=0.\]
Note that $x'(\lambda)\neq0$, and $x'(\lambda)$ differs from a multiple
of $x(\lambda)$ for $\xi_{-d}(z)=1$. This contradicts the fact,
however, that the spectrum of $K(w)$ is simple. One concludes that
the set of roots of $P(z)$ coincides with $\spec(K(w))$. Necessarily,
$P(z)$ is equal to the characteristic function of $K(w)$.\end{proof}
\begin{rem}
With the aid of (\ref{eq:char_rel_F}) one can rederive equality (\ref{eq:char_red_n}).
For $1\leq n\leq d$, a straightforward computation gives\[
\chi(n)=(-1)^{d+n}w^{2d+1}\,\mathfrak{J}(d-n+1,d+n+1).\]
Equality (\ref{eq:char_red_n}) then follows from (\ref{eq:Jmn}).
\end{rem}

\section*{Acknowledgments}

The authors wish to acknowledge gratefully partial support from Grant
No. 201/09/0811 of the Czech Science Foundation (P.\v{S}.) and from
Grant No. LC06002 of the Ministry of Education of the Czech Republic
(F.\v{S}.).


\begin{thebibliography}{7}
\bibitem{AbramowitzStegun} M.~Abramowitz, I.~A.~Stegun: \emph{Handbook
of Mathematical Functions with Formulas, Graphs, and Mathematical
Tables}, (Dover Publications, New York, 1972).

\bibitem{Akhiezer} N.~I.~Akhiezer: \emph{The Classical Moment Problem
and Some Related Questions in Analysis}, (Oliver~\&~Boyd, Edinburgh,
1965).

\bibitem{Andrews} G.~E.~Andrews: \emph{The Theory of Partitions},
(Addison-Wesley, Reading, 1976).

\bibitem{Erdelyi_etal} A.~Erd\'elyi, W.~Magnus, F.~Oberhettinger,
F.~G.~Tricomi: \emph{Higher Transcendental Functions II,} (McGraw-Hill,
New York, 1953).

\bibitem{GasperRahman} G.~Gasper, M.~Rahman: \emph{Basic Hypergeometric
Series}, Second Edition, (Cambridge University Press, Cambridge, 2004).

\bibitem{KlimykSchmudgen} A.~Klimyk, K.~Schm\"udgen: \emph{Quantum
Groups and Their Representations}, (Springer-Verlag, Berlin, 1997).

\bibitem{KuznetsovSklyanin} V.~B.~Kuznetsov, E.~K.~Sklyanin:
\emph{Eigenproblem for Jacobi matrices: hypergeometric series solution},
Phil. Trans. R.~Soc.~A \textbf{366} (2008) 1089-1114.
\end{thebibliography}
\end{document}